\documentclass[onecolumn]{IEEEtran}
\usepackage{cite}
\usepackage{amsmath,amssymb,amsfonts}
\usepackage{algorithmic}
\usepackage{graphicx}
\usepackage{textcomp}
\usepackage{xcolor}
\usepackage{dsfont}
\usepackage{xfrac}
\usepackage{mathtools}
\usepackage{mathabx}
\usepackage{stackengine}
\usepackage{tikz}
\usepackage{hyperref}
\usepackage{fontawesome}
\usetikzlibrary{arrows.meta, positioning, shapes.geometric, backgrounds, calc, fit}
\def\BibTeX{{\rm B\kern-.05em{\sc i\kern-.025em b}\kern-.08em
		T\kern-.1667em\lower.7ex\hbox{E}\kern-.125emX}}
\begin{document}

	\newtheorem{theorem}{Theorem}
\newtheorem{lemma}[theorem]{Lemma}
\newtheorem{corollary}[theorem]{Corollary}
\newtheorem{defn}{Definition}
\newtheorem{proposition}{Proposition}
\newtheorem{remark}{Remark}
\newtheorem{claim}{Claim}
\renewcommand{\tilde}{\widetilde}
\renewcommand{\hat}{\widehat}
\renewcommand{\bar}{\widebar}
\newcommand{\openone}{\mathds{1}}
\newcommand{\Mc}{\mathcal{M}}
\newcommand{\Ac}{\mathcal{A}}
\newcommand{\Gc}{\mathcal{G}}
\newcommand{\Vc}{\mathcal{V}}
\newcommand{\Qc}{\mathcal{Q}}
\newcommand{\Sbf}{\mathbf{S}}
\newcommand{\gcirc}{\mathring{g}}
\newcommand{\Gcirc}{\mathring{G}}
\newcommand{\Vcirc}{\mathring{V}}
\newcommand{\vcirc}{\mathring{v}}
\newcommand{\supp}{\text{supp}}
\newcommand{\Mcirc}{\mathring{M}}
\newcommand{\mcirc}{\mathring{m}}
\newcommand{\Shat}{\hat{S}}
\newcommand{\var}{\text{var}}
\newcommand{\cov}{\text{cov}}
\newcommand{\Scirc}{\mathring{S}}
\newcommand{\Ec}{\mathcal{E}}
\newcommand{\Xc}{\mathcal{X}}
\newcommand{\Yc}{\mathcal{Y}}
\newcommand{\Zc}{\mathcal{Z}}
\newcommand{\Tr}{\mathrm{Tr}}
\newcommand{\Pc}{\mathcal{P}}
\newcommand{\Fc}{\mathcal{F}}
\newcommand{\Cc}{\mathcal{C}}
\newcommand{\Dc}{\mathcal{D}}
\newcommand{\Hc}{\mathcal{H}}
\newcommand{\thetab}{\bar{\theta}}
\newcommand{\phat}{\hat{p}}
\newcommand{\mhat}{\hat{m}}
\newcommand{\MSE}{\mathrm{MSE}}
\newcommand{\that}{\hat{\theta}}
\newcommand{\polya}{P\'olya\ }
\newcommand\numberthis{\addtocounter{equation}{1}\tag{\theequation}}
\newcommand{\Pbar}{\bar{\mathbf{P}}}
\newcommand{\Phat}{\hat{P}}
\newcommand{\sigmab}{\bar{\sigma}}
\newcommand{\sigmat}{\tilde{\sigma}}
\newcommand{\Ptilde}{\tilde{P}}
\newcommand{\alphat}{\tilde{\alpha}}
\newcommand{\Qbar}{\bar{Q}}
\newcommand{\Wc}{\mathcal{W}}
\newcommand\barbelow[1]{\stackunder[1.4pt]{$#1$}{\rule{1ex}{.075ex}}}
\newcommand{\uepsilon}{\barbelow \epsilon}

	\title{Minimax bounds for watermarked and masked recursive discrete distribution estimation
		\thanks{The work in this paper was supported by SNSF Grant 10004381}
			\thanks{Email: \href{mailto:millen.kanabar@epfl.ch}{millen.kanabar@epfl.ch}, \href{michael.gastpar@epfl.ch}{michael.gastpar@epfl.ch}}
	}

	\author{%
		\IEEEauthorblockN{Millen Kanabar and Michael Gastpar}\\
		\IEEEauthorblockA{School of Computer and Communication Sciences,\\
			EPFL, Switzerland}
	}

	\maketitle
	
	\begin{abstract}
		Watermarking has been proposed as a way to identify synthetic samples in estimation settings where no metadata is available to distinguish them from real samples, but its precise effects remain unexplored. In the absence of a distinguishing mechanism, it has been shown that adding synthetic samples significantly reduces the marginal efficacy of new real samples. In this work, we study the minimax loss of such recursive discrete distribution estimation in the presence of watermarks in contrast to the unassisted and oracle-assisted losses. When the fraction of real samples vanishes asymptotically, we provide a lower bound that shows that it is impossible to improve performance by adding watermarks unless the false negative rate of detection also vanishes. Additionally, we show that in most regimes, the worst-case losses of a sequence of simple deterministic estimators match the corresponding lower bounds up to constants. Finally, we propose masking, a randomization procedure that narrows the gap in the remaining regimes to a Jensen gap. We conjecture that a tighter lower bound argument can close this gap.
	\end{abstract}

	\section{Introduction}
	In the generative machine learning literature, model collapse refers to the phenomenon where the underlying distributions of models trained only on their previous generation's outputs degrade with each iteration. This stands to reason, since finitely many samples disproportionately represent higher-probability outcomes. This leads to increasingly noisy estimates for lower-probability outcomes, until the output distribution eventually gets corrupted into a point mass by this sampling noise.
	
	Accumulating each batch of samples used for training and including fresh real samples in each batch can lead to eventual convergence to the ground truth, so long as real samples do not get `lost' in a sea of synthetic ones. Even so, if synthetic samples are not distinguished from real samples, the reinforcement of higher probability symbols in the empirical distribution of the corpus acts as noise and affects performance. In a previous work, we quantify the effect of this noise in the discrete distribution estimation setting \cite{kanabar_model_2026} via upper and lower minimax bounds. These bounds show that even in the regime where the expected number of real samples asymptotically grows to infinity, if synthetic sampling outpaces real sampling, the marginal utility of new real samples is greatly reduced.
	
	In the literature and in practice, watermarks have been proposed as a mechanism to identify synthetic samples and detect their provenances. When reliable, they can offer a useful way to identify and discard synthetic samples from the corpus. In this work, we use reductions from the watermarking-assisted setting to the unassisted setting studied in our previous work to find conditions where watermarking makes a difference. 
	
	The problem setting in our work is depicted in Figure \ref{fig:accumulate-workflow}. New samples added at stage $t$ are assumed to be drawn from the ground truth distribution $p$ with probability $\alpha_t$ and from the previous estimate $\Phat_{t-1}$ otherwise. We assume that the mixing factors $\alpha_t$ are known to the estimator beforehand. We also assume that samples do not change appreciably with the application of watermarks, and therefore ignore any distortion in the distribution of synthetic samples due to watermarking. This assumption mirrors watermarking in image models, where a small amount of structured noise is added as a watermark.
	
	The estimator is equipped with a watermark detector. Since watermarking is not a perfect procedure to identify sample provenance (see e.g.\ \cite{zhang_watermarks_2024, gowal_synthid-image_2025}), we model the outcome of the detector as a random variable correlated with the provenance of the sample with known output distributions given whether it is real or synthetic. This subsumes both detectors that provide a `confidence rating' instead of a binary outcome and training methods that might use additional metadata collected with each sample as soft watermarks.
	
	Through our lower bound, we observe that watermarking can improve performance significantly only if the false negative rate of the watermark detector shrinks with the fraction of real samples, matching the oracle-assisted setting (equivalently, perfect watermarking) up to constant factors whenever this rate is linear or faster. We also observe that a matching upper bound (up to constant factors) is achieved by simple deterministic estimators in most regimes. However, when synthetic sampling outpaces real sampling extremely rapidly, the performance of these estimators falls short of the lower bound. A similar effect was also observed in the unassisted setting in our previous work. With this in mind, we propose a randomization procedure (which we refer to as masking) that narrows it to a Jensen gap. We conjecture that a tighter lower bound might close this gap.

	\begin{figure}
		\centering
		\resizebox{0.6\columnwidth}{!}{
			\begin{tikzpicture}[
				node distance=1.2cm and 1.5cm,
				data/.style={cylinder, draw=black!70, fill=blue!20, thick, aspect=0.3, minimum width=2.0cm, minimum height=1.4cm, align=center, shape border rotate=90},
				fresh/.style={cylinder, draw=black!70, fill=yellow!20, thick, aspect=0.3, minimum width=2.0cm, minimum height=1.4cm, align=center, shape border rotate=90},
				model/.style={rectangle, draw=black!70, fill=red!20, thick, minimum width=2.0cm, minimum height=1.2cm, align=center, rounded corners},
				gen/.style={rectangle, draw=black!70, fill=green!20, thick, dashed, minimum width=2.0cm, minimum height=1.2cm, align=center, rounded corners},
				arrow/.style={-{Stealth[scale=1.2]}, thick, draw=black!80},
				mix/.style={circle, draw=black!70, fill=gray!20, thick, inner sep=2pt},
				detect/.style={circle, draw=black!70, thick, inner sep=2pt}
				]
				
				\node[data] (Dn1) {Corpus at \\ stage $t-1$};
				\node[left=2.5cm of Dn1] (dots1) {$\cdot \cdot\cdot$};
				\node[detect, below=of Dn1] (Dw1) {\faSearch};
				\node[model, below= of Dw1] (Mn1) {Estimate \\ $\Phat_{t-1}$};
				\node[left=0 of Dw1, align=center, font=\scriptsize] (Dw1text) {Detect\\ watermarks};
				
				\node[gen, right= 1.2cm of Mn1] (Gn) {New synthetic  \\ samples$\sim \Phat_{t-1}$};
				\node[fresh, above right=0.7cm and 1.5cm of Dn1] (Fresh) {New real  \\ samples$\sim p$};
				
				\node[mix, right=3.5cm of Dn1, font=\large] (Mix) {$\bigcup$};

				\node[mix, below= of Mix] (Watermark) {$\bigplus$};
				\node[left= 0cm of Watermark, font=\scriptsize, align=center] {Add\\watermarks};
				
				\node[data, right=1.4cm of Mix] (Dn) {Corpus at\\ stage $t$};
				\node[detect, below=of Dn] (Dw) {\faSearch};
				\node[model, below=of Dw] (Mn) {Estimate \\ $\Phat_t$};
				\node[left=0 of Dw, align=center, font=\scriptsize] (Dwtext) {Detect\\ watermarks};
				\node[right=2.5cm of Dn] (dots2) {$\cdot \cdot\cdot$};
				\node[right=2.5cm of Mn] (dots3) {$\cdot \cdot\cdot$};
				
				\draw[arrow] (Dw1) -- node[left, font=\scriptsize] {Estimate} (Mn1);
				\draw[arrow] (Mn1) -- node[above, font=\scriptsize] {Sample} (Gn);
				
				\draw[arrow] (Dn1) -- (Dw1);
				\draw[arrow] (Dn1) -- node[above, font=\scriptsize] {Retain} (Mix);
				\draw[arrow] (Fresh) -| (Mix);
				\draw[arrow] (Gn) -| (Watermark);
				\draw[arrow] (Watermark) -- (Mix);
				\draw[arrow] (dots1) -- (Dn1);
				\draw[arrow] (Dn) -- (dots2);
				\draw[arrow] (Mn) -- (dots3);
				
				\draw[arrow] (Mix) -- (Dn);
				\draw[arrow] (Dw) -- node[left, font=\scriptsize] {Estimate} (Mn);
				\draw[arrow] (Dn) -- (Dw);
				
				\begin{scope}[on background layer]
					\node[fit=(Fresh)(Gn)(Mix), fill=gray!20, rounded corners, draw=gray!20, inner sep=10pt] (bg) {};
				\end{scope};
				
				\begin{scope}[on background layer]
					\node[fit=(Dw1)(Mn1)(Dw1text), fill=orange!20, rounded corners, draw=gray!20, inner sep=3pt] (bg2) {};
				\end{scope}
				
				\begin{scope}[on background layer]
					\node[fit=(Dw)(Mn)(Dwtext), fill=orange!20, rounded corners, draw=gray!20, inner sep=3pt] (bg3) {};
				\end{scope}
			\end{tikzpicture}
		}
		\caption{Stage $t$ in the accumulation workflow with watermarking}
		\label{fig:accumulate-workflow}
	\end{figure}
	\subsection{Notation}
	Deterministic quantities and functions are represented by lowercase Greek and Roman alphabets; random variables are represented by uppercase Roman alphabets. The probability simplex corresponding to the alphabet of size $k$ is denoted as $\Delta_k$. We use $[a:b]$ to denote $\{x\in \mathbb{N}: a\leq x\leq b\}$. The unit vector in with component $j$ equal to $1$ is denoted as $e_j$.
	
	The $j^{\mathrm{th}}$ component of a random or deterministic vector $P$ or $p$ is indexed by square brackets as $P[j]$ and $p[j]$ respectively, and the probability of an event $A$ under distribution $Q$ is denoted as $Q\{A\}$. The $n$-fold product of a distribution $Q$ is denoted as $Q^n$. 
	
	For $t\geq 0$ and functions $f, g$, we write $g(t) = o(f(t))$ if $\lim_{t\uparrow\infty}\sfrac{g(t)}{f(t)} = 0$; $g(t) = \mathcal O(f(t)) $ if there exists a constant $a\geq 0$ such that $g(t) \leq af(t) \ \forall t\geq 0$ and $g(t) = \Theta(f(t))$ if $g(t) = \mathcal{O} (f(t))$ and $f(t) = \mathcal{O} (g(t))$.
	
	\subsection{Related work}Model collapse has been widely observed empirically e.g.\ in \cite{shumailov_ai_2024, alemohammad2024selfconsuming, kazdan_collapse_2025, shi_closer_2026} and analyzed theoretically for discrete distributions \cite{seddik2024how, suresh2025rate, shumailov_ai_2024}, parametric models \cite{bertrand2024on, marchi_heat_2024, xu_probabilistic_2025, barzilai_when_2026}, regression and kernel density estimation \cite{fu_towards_2024, dohmatob_model_2024, gerstgrasser_is_2024, feng_beyond_2024, dohmatob2025strong, vu_what_2025, garg_preventing_2026}, and diffusion models \cite{fu_towards_2024, khelifa_error_2026}. Various mechanisms to mitigate model collapse have also been proposed, such as accumulating data \cite{gerstgrasser_is_2024, marchi_heat_2024, seddik2024how, fu_theoretical_2025, kazdan2025collapse}, watermarking \cite{shumailov_ai_2024, he_golden_2025, ji_overview_2026, racca_language_2026}, and data curation \cite{feng_beyond_2024, dohmatob_why_2025, amin2025escaping}.  Our masking procedure is inspired, in part, by recent work \cite{amin_learning_2026} that reduces recursive learning to the PU learning setting \cite{mansouri_learning_2025}. Additionally, in our setting, we only consider samples from the ground truth distribution and known estimates; similar bounds for settings such as \cite{jain_scaling_2024} that use samples from unknown distributions close to the ground truth remain open.
	
	\section{Problem Setup}
	Let $p\in \Delta_k$ be the distribution being estimated, and $\left(n_t, \alpha_t\right)_{t\geq 0}$ be a sequence over $\mathbb{N} \times [0, 1]$. The estimators might also be equipped with additional randomness $U_t$; the randomized estimates are available to the estimator at each subsequent stage of estimation. Sampling and estimation then follows the following procedure:
	\subsection{Sampling procedure}
	\paragraph{Initial samples and estimates}
	At stage $0$, the estimator receives $n_0$ samples $X_0^{n_0}$ sampled i.i.d.\ from $p$. 
	The estimator releases the estimate $\Phat_0 = \phat(X_0^{n_0}, U_0)$ and retains it for subsequent stages of estimation. 
	\paragraph{Recursive estimation}
	At every subsequent stage $t\geq 1$, random variables $Y_t^{n_t} \sim \left(\mathrm{Ber}(\alpha_t)\right)^{n_t}$ are drawn. These will determine the provenances of the samples. Subsequently, samples $X_t^{n_t}$ are drawn conditionally independently from $\bar{P}(X|\Phat_{t-1}, Y)$ given as
	\begin{align}
		\bar{P}(X = x|\Phat_{t-1}, Y) = \begin{cases}
			p[x] & Y = 1\\
			\Phat_{t-1}[x] & Y=0.
		\end{cases}
	\end{align}The estimate $\Phat_t := \phat_t\left((X_i^{n_i}, U_i)_{t\in[0:t]}\right)$ is released and retained for future estimation. For ease of notation, we define $Y_0^{n_0} $ as the all-ones vector. We will refer to the joint distribution of the whole sequence $\left(U_i, X_i^{n_i}\right)_{ i\in [0:t] }$ as $\Pbar_{t, p}$.

	\paragraph{Watermarking}Additionally, when a watermarking mechanism is present, we model the output of the watermarking detector as random variables $W_{t}^{n_t} \in \Wc^{n_t}$, with $|\Wc| = k_W$. Given $Y_t$, these are drawn conditionally independently from\begin{align}
		\bar{P}_{W_t|Y_t}(w|y) = \begin{cases}
			Q_{0, t}[w] & y=0\\
			Q_{1, t}[w] & y=1.
		\end{cases}
	\end{align} Once again, for ease of notation, we define $W_0^{n_0}$ as the all-ones vector. For conciseness, we will drop the dependence on $t$ and only write $Q_0$ and $Q_1$, however, the proofs do not depend on these staying constant.
	In this setting, the watermark-aided estimates $\Phat_t^{(\mathrm{w})}:=\phat_t^{(\mathrm{w})}\left((X_i^{n_i}, W_i^{n_i}, U_i)_{i\in[0:t]}\right)$ are released and retained for subsequent stages.
	We also add a superscript to the joint distribution of all the information available $\Pbar_{t, p}^{(\mathrm w)}$ to indicate the watermarking setting. The marginal distribution of the watermark $\alpha_t Q_1 +(1-\alpha_t)Q_0$ is denoted as $\bar{Q}_{\alpha_t}$. 
	
	For the rest of the paper, we use the term `watermark' to refer to the output of the detector.

	\paragraph{Oracle-assisted baseline} We consider estimators to be `oracle-assisted' when the sample provenances $(Y_i^{n_i})_{}$ are also available to the estimator via an oracle; this information is lost otherwise. In this setting, oracle-assisted estimates, denoted as $\Phat_t^{(\mathrm{o})} := \phat_t^{(\mathrm{o})}\left((X_i^{n_i}, Y_i^{n_i}, U_i)_{i\in[0:t]}\right)$, are released and retained in memory. As with watermarking, we also add the superscript $\Pbar_{t, p}^{(\mathrm o)}$ to indicate the oracle-assisted setting. 
	
	We use the unassisted and oracle-assisted performance of estimators as lower and upper baselines in this work: good watermarks and will get the loss as close to the oracle-assisted performance as possible, whereas a loss close to the unassisted performance will signal ineffective watermarks.
	
	\subsection{Measuring performance}
	We use the minimax expected $\ell_2^2$ loss as the figure of merit.
	\begin{defn}[Minimax expected loss] \label{def:minimax-loss}
		The minimax expected loss $r_{t, k}$ of the sequence $\phat_t$ is given as 
		\begin{align}
			r_{t, k} := \inf_{\phat_t} \sup_{p\in \Delta_k} E_{\Pbar_{t, p}} \left[\ell_2^2 \left(\Phat_t, p\right)\right].
		\end{align}
		The oracle-assisted and watermark-assisted minimax losses $r^{(\mathrm{o})}_{t, k}$ and $r^{(\mathrm{w})}_{t, k}$ are defined similarly.
	\end{defn}
	
	When estimating a discrete distribution $p \in \Delta_k$ from $n$ i.i.d.\ samples, the minimax $\ell_2^2$ loss is $\Theta(\sfrac{1}{n})$. Along similar lines, in the recursive estimation settings in this work, we informally refer to the multiplicative inverse of the expected $\ell_2^2$ losses (suitably normalized by constants) as the `effective sample size'.
	For example, using \cite[Theorems 1 \& 2]{kanabar_model_2026}, under most regimes in the unassisted setting, we find that the minimax loss is upper and lower bounded as
	\begin{align*}
		\frac{c_1}{\sum\limits_{i=0}^t n_i\alpha_{i}^2} \leq r_{t, k} \leq \frac{c_2}{\sum\limits_{i=0}^t n_i\alpha_{i}^2}
	\end{align*}
	for some positive constants $c_1$ and $c_2$ independent of the sequences $n_t, \alpha_t$. Consequently, we refer to the denominator $\sum\limits_{i=0}^t n_i\alpha_{i}^2$ as the effective sample size, and the individual terms $n_i\alpha_{i}^2$ as the effective sample size at stage $i$.

	\section{Main results}
	
	To help state the results, we first define the memory term that encapsulates the effect estimates have on lower bounds for future stages.
	
	\begin{defn}[Memory terms for the lower bound] \label{def:error-term-lb}
		The memory term $g_t(\rho)$ (and, similarly, $g_t^{(\mathrm{w})}(\rho)$) is defined as
		\begin{align}
			g_t(\rho) = \sup_{\substack{p\in \Delta_k\\x\in \Xc}}\Pbar_{t, p} \left\{\left|\Phat_{t-1}[x] - p[x]\right| \geq \rho\right\}.
		\end{align}The term $h_t(\rho)$ (and similarly $h_t^{(\mathrm{w})}(\rho)$) is defined as
		\begin{align}
			h_t(\rho) = \sum_{i=1}^{t}n_i\alpha_i g_i(\rho).
		\end{align}
	\end{defn}
	
	Notice that the memory terms depend on the deviation probabilities of the previous estimators. We comment on the necessity of these in our bounds in Section \ref{sec:masking-discussion} where we discuss the masking estimator.

	\subsection{Minimax bounds with watermarking}
	\begin{theorem}[Lower bound with watermarks] \label{thm:watermark-lb}
		Consider the watermark-aided recursive distribution estimation problem with tuples $(U_i, X_i^{n_i}, W_i^{n_i})_{i\in[0:t]}$ distributed according to $\Pbar^{(\mathrm{w})}_{t, p}$. There exists a constant $m$ independent of $t$ such that
		\begin{align}
			r_{t, k}^{(\mathrm{w})} \geq \frac{m}{\sum_{i=0}^{t}n_i\alpha_i^2\left(1+\chi^2\left(Q_1\middle\|\bar Q_{\alpha_{i}}\right)\right) + h_t^{(\mathrm{w})}(\sfrac{1}{4k})}
		\end{align} whenever $\sum_{i=0}^{t}n_i\alpha_i^2\left(1+\chi^2\left(Q_1\middle\|\bar Q_{\alpha_{i}}\right)\right) + h_t^{(\mathrm{w})}(\sfrac{1}{4k})\geq k/4$, with $h_t^{(\mathrm{w})}(\sfrac{1}{4k})$ defined in Definition \ref{def:error-term-lb}.
	\end{theorem}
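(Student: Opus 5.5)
We will prove the bound with a standard Assouad-type (or van Trees / Fisher information) construction. The plan has three parts: build a local family of hypotheses, bound the information each stage carries about the perturbation, and let a Bayesian Cramér--Rao argument turn the information bound into a risk bound.

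\textbf{Hypothesis family and overall strategy.} Assume $k$ is even. Pair the coordinates and consider
\begin{align*}
p_\theta[2j-1] = \tfrac1k + \theta_j,\qquad p_\theta[2j] = \tfrac1k-\theta_j,\qquad |\theta_j|\le \delta,
\end{align*}
with $\delta\le \sfrac{1}{4k}$. We put a smooth prior on each $\theta_j$, for instance a $\cos^2$ prior on $[-\delta,\delta]$, whose Fisher information is of order $1/\delta^2$. By the van Trees inequality, coordinatewise,
\begin{align*}
E\,\ell_2^2(\Phat_t,p_\theta)\gtrsim \sum_j \frac{1}{E_\theta[I_j(\theta)]+c/\delta^2},
\end{align*}
where $I_j$ is the Fisher information of the full observation $(U_i,X_i^{n_i},W_i^{n_i})_{i\le t}$ about $\theta_j$. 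Suppose we can show $\sum_j I_j\lesssim k\,(N+h)$, where $N=\sum_i n_i\alpha_i^2(1+\chi^2(Q_1\|\bar Q_{\alpha_i}))$ and $h=h^{(\mathrm w)}_t(\sfrac1{4k})$. Then choosing $\delta^2\asymp 1/(k(N+h))$ gives a risk of order $1/(N+h)$. The condition $N+h\ge k/4$ is exactly what guarantees $\delta\le \sfrac1{4k}$, so that every $p_\theta$ stays in the simplex.

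\textbf{Information per stage via the chain rule.} The Fisher information of the whole sequence decomposes as a sum over stages of the conditional information of $(X_i^{n_i},W_i^{n_i})$ given the past; $U_i$ does not depend on $\theta$. Given the past, each sample pair $(X,W)$ has law $\alpha Q_1[w]p_\theta[x]+(1-\alpha)Q_0[w]\Phat_{i-1}[x]$. Differentiating in $\theta_j$ yields the score
\begin{align*}
\frac{\alpha Q_1[w]\,\partial p_\theta[x]}{\alpha Q_1[w]p_\theta[x]+(1-\alpha)Q_0[w]\Phat_{i-1}[x]}.
\end{align*}
On the event $\{|\Phat_{i-1}[x]-p_\theta[x]|<\sfrac{1}{4k}\}$, both $\Phat_{i-1}[x]$ and $p_\theta[x]$ are of order $1/k$. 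The denominator is then comparable to $\bar Q_\alpha[w]/k$, so the per-sample information is at most a constant times
\begin{align*}
k\,\alpha^2\sum_w \frac{Q_1[w]^2}{\bar Q_\alpha[w]}=k\,\alpha^2\bigl(1+\chi^2(Q_1\|\bar Q_\alpha)\bigr),
\end{align*}
summed over the two coordinates of the pair. On the complementary event, which has probability at most $g_i^{(\mathrm w)}(\sfrac1{4k})$, we use the crude bound obtained by dropping the synthetic term from the denominator. This gives per-sample information at most about $\alpha\sum_w Q_1[w]\,|\partial p|^2/p\lesssim k\alpha$. Summing over samples and stages produces the term $k\,h$.

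\textbf{Main obstacle.} The hard step is handling the dependence of $\Phat_{i-1}$ on $\theta$ inside the conditional information. The chain rule must be applied carefully: the conditional Fisher information is taken with $\Phat_{i-1}$ fixed, as a function of past observations, and $\theta$ enters only through $p_\theta$. The difficulty is that the bad event involves $p_\theta$, so the "good" bound must be taken in expectation under $\Pbar^{(\mathrm w)}_{t,p_\theta}$ before averaging over the prior. If the Fisher-information route runs into regularity issues, the fallback is Assouad's lemma with $\chi^2$ or KL bounds between neighbouring hypotheses, split on the same good and bad events. A further check is the cross-term of the pair coordinates: because the perturbation of the two coordinates in a pair is antisymmetric, the contributions should only double, not cancel or grow.
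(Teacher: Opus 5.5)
Your argument is correct, but it goes through the van Trees inequality rather than the paper's Assouad/KL route.

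\textbf{The paper's route.} It applies Assouad's lemma (Lemma~\ref{lem:lower-bound-partial}) on the hypercube $\{p_v\}$ and expands $D(\Pbar^{(\mathrm w)}_{t,p_v}\|\Pbar^{(\mathrm w)}_{t,p_{v-2e_j}})$ by the chain rule. Because the law of $W$ does not depend on $p$, each per-sample divergence becomes an average over $W\sim\bar Q_{\alpha_i}$ of divergences of $X$ given $W=w$. Given $W=w$, a sample is simply an unwatermarked sample with effective mixing factor $\tilde\alpha_{i,w}=\alpha_iQ_1[w]/\bar Q_{\alpha_i}[w]$. The earlier unwatermarked per-stage bound, proportional to $n_i\tilde\alpha_{i,w}^2+n_i\tilde\alpha_{i,w}g_i(\sfrac{1}{4k})$, is then imported as a black box. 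It is averaged using $E[\tilde\alpha_{i,W}^2]=\alpha_i^2(1+\chi^2(Q_1\|\bar Q_{\alpha_i}))$ and $E[\tilde\alpha_{i,W}]=\alpha_i$.

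\textbf{How the two compare.} Your good/bad-event computation is exactly that imported bound, redone in Fisher-information form on the joint law of $(X,W)$. This follows from the identity $\alpha^2Q_1[w]^2/(\alpha Q_1[w]p[x]+(1-\alpha)Q_0[w]\Phat[x])=\bar Q_\alpha[w]\,\tilde\alpha_w^2/(\tilde\alpha_w p[x]+(1-\tilde\alpha_w)\Phat[x])$.

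\emph{Your route} is self-contained. The regularity van Trees needs holds automatically here: each conditional pmf is polynomial in $\theta$, and its support does not depend on $\theta$ since $p_\theta[x]\geq 3/(4k)$. You can even fix $\delta=\sfrac{1}{4k}$ and absorb the prior information, which is of order $k^2$, using $k\leq 4(N+h)$, so no tuning of $\delta$ is needed.

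\emph{The paper's route} needs neither differentiability nor a smooth prior. It is also modular: it shows explicitly that a watermark acts only by reweighting the mixing factor sample by sample. Any improvement to the unwatermarked memory-term bound therefore transfers immediately.

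\textbf{Two small corrections.}

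First, your per-pair computation bounds each $E_\theta[I_j]$ by a constant times $k(N+h)$, not the sum over $j$; that sum is of order $k^2(N+h)$. The per-coordinate bound is what van Trees needs, and it gives $\sum_j 1/(E_\theta[I_j]+c/\delta^2)$ at least a constant times $(k/2)/(k(N+h))$, as you intended.

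Second, the obstacles you flag are not real:
\begin{itemize}
\item The conditional information treats $\Phat_{i-1}$ as fixed.
\item The event split is a pointwise bound on the integrand for each fixed $\theta$, and the bad-event probability is controlled by the supremum over $p$ in Definition~\ref{def:error-term-lb}.
\item Within a single sample, the score for $\theta_j$ is nonzero on only one coordinate of the pair, so no cross terms arise.
\end{itemize}
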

	
	We also show that the lower bound at each stage $t$ is achievable by a sequence of deterministic estimators in most regimes under mild parametric assumptions. 
	\begin{theorem}[Upper bound with watermarks] \label{thm:watermark-ub}
		For the problem setup given above, there exists a sequence of deterministic estimators for which the worst-case loss is bounded above as
		\begin{align}
			\sup_p E[\ell_2^2(\Phat^{(\mathrm{w})}_t, p)] \leq \frac{3}{\sum_{i=0}^t n_i\alpha_i^2(1+\chi^2\left(Q_1\middle\|\bar Q_{\alpha_{i}}\right))}
		\end{align}
		whenever, for every $t\geq 1$,
		\begin{align}
			\frac{8\log\left(k_W\left(1+\chi^2\left(Q_1\middle\|\bar Q_{\alpha_{t}}\right)\right)\right)}{\min_w \bar Q_{\alpha_t}[w]} \leq n_t \leq \frac{1}{\alpha_t}\sum_{i=0}^t n_i\alpha_i^2.\label{eq:watermark-ub-conditions}
		\end{align}
		
	\end{theorem}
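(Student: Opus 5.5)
The plan is to build an explicit recursive, importance-weighted estimator and control its error by a one-step recursion on the scaled loss. The key observation is that, given the watermark $W=w$ and the previous estimate, a fresh stage-$t$ sample $X$ is drawn from the mixture
\begin{align*}
\pi_t(w)\, p + (1-\pi_t(w))\,\Phat_{t-1}, \qquad \pi_t(w) := \frac{\alpha_t Q_1[w]}{\bar Q_{\alpha_t}[w]} = \Pbar\{Y=1\mid W=w\}.
\end{align*}
Hence, for any weight function $c$, we have $E[c(W)(e_X-\Phat_{t-1})\mid \Phat_{t-1}] = E[c(W)\pi_t(W)]\,(p-\Phat_{t-1})$. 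The choice of $c$ that maximizes signal per unit second moment is $c=\pi_t$. A direct computation then gives the stage-$t$ ``effective sample size'' $s_t := n_t E[\pi_t(W)^2] = n_t\alpha_t^2\bigl(1+\chi^2(Q_1\|\bar Q_{\alpha_t})\bigr)$, which is exactly the summand in the claimed bound.

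\textbf{Estimator.} Set $S_t:=\sum_{i=0}^t s_i$, where $s_0=n_0$ because $\alpha_0=1$ and $\chi^2=0$ at stage $0$. Let $\Phat_0$ be the empirical distribution of $X_0^{n_0}$. For $t\ge1$, define
\begin{align*}
\Phat_t := \Phat_{t-1} + \frac{1}{S_t}\sum_{j=1}^{n_t}\pi_t(W_{t,j})\bigl(e_{X_{t,j}}-\Phat_{t-1}\bigr).
\end{align*}
This estimator is deterministic and its entries sum to one. If it leaves the simplex, I will release the Euclidean projection onto $\Delta_k$ instead. The projection is $1$-Lipschitz and fixes $p$, so it can only decrease the $\ell_2$ error, and the recursion below still holds with the projected estimate fed forward.

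\textbf{Error recursion.} Write $E_t:=\Phat_t-p$ and $D_t:=\sum_j \pi_t(W_{t,j})(e_{X_{t,j}}-\Phat_{t-1})$. Conditionally on the past, the summands of $D_t$ are i.i.d. with mean $-E[\pi_t^2]\,E_{t-1}$, so $E[D_t\mid\text{past}] = -s_tE_{t-1}$. Therefore
\begin{align*}
E_t = \frac{S_{t-1}}{S_t}E_{t-1} + \frac{1}{S_t}\bigl(D_t - E[D_t\mid \text{past}]\bigr),
\end{align*}
and the fluctuation term is conditionally mean zero. Its conditional second moment is at most $n_t E\bigl[\pi_t^2\|e_X-\Phat_{t-1}\|^2\bigr]\le 2 s_t$, using $\|e_x-q\|_2^2\le 2$ for $q\in\Delta_k$. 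This yields
\begin{align*}
S_t^2\,E\|E_t\|^2 \le S_{t-1}^2\,E\|E_{t-1}\|^2 + 2s_t.
\end{align*}
The base case is $S_0^2E\|E_0\|^2\le n_0$, since the empirical estimate has loss at most $1/n_0$. Telescoping gives $S_t^2E\|E_t\|^2\le n_0+2\sum_{i\ge1}s_i\le 2S_t$, hence $E\|E_t\|^2\le 2/S_t\le 3/S_t$ uniformly in $p$.

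\textbf{Main obstacle.} The delicate step is the second-moment bound on the fluctuation, and with it the question of where the hypotheses \eqref{eq:watermark-ub-conditions} are actually needed. With the deterministic normalization $1/S_t$ above, the computation goes through without them. If instead one normalizes by the \emph{empirical} weight $\sum_j\pi_t(W_{t,j})^2$, which is natural for keeping the estimate a convex combination and avoiding projection, then two conditions become relevant. The lower bound on $n_t$ is what makes every watermark symbol appear with high probability, so that the empirical $\chi^2$-weight concentrates around $E[\pi_t^2]$; the $\log(k_W(1+\chi^2))$ factor comes from a union bound over $w$. The upper bound $n_t\alpha_t\le\sum_{i\le t}n_i\alpha_i^2$ is what keeps the step size $s_t/S_t$, and the cross term between the bias $E_{t-1}$ and the stage-$t$ noise, from inflating the constant beyond $3$.

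I would first check whether the projection argument truly preserves the recursion. The concern is that the projected estimate changes the mixture that generates the next batch; since the recursion only ever uses $\Phat_{t-1}\in\Delta_k$, it should be fine. If it does, the conditions are only needed as slack. If it does not, I would switch to the empirically normalized estimator and invoke the two conditions exactly at the concentration step and at the step-size bound, respectively.
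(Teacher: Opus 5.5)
Your argument is correct, and it proves the bound by a different route than the paper. In fact it proves something slightly stronger.

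\textbf{How the two estimators relate.} Unwind the paper's construction: Lemma~\ref{lem:upper-bound-uncombined} is applied separately to each watermark class with mixing factor $\alphat^{(\mathrm{w})}_{t,w}=\pi_t(w)$, the classes are pooled by inverse variance with the random weights $N_t[w]\pi_t(w)^2$, and the result is pooled with $\Phat_{t-1}$ via Lemma~\ref{lem:linear-combinations}. On the good event the resulting update is
\begin{align*}
\Phat_t = \Phat_{t-1} + \frac{s_t}{S_t R_t}\sum_{j=1}^{n_t}\pi_t(W_{t,j})\bigl(e_{X_{t,j}}-\Phat_{t-1}\bigr), \qquad R_t := \sum_{j=1}^{n_t}\pi_t(W_{t,j})^2 .
\end{align*}
This is your update with the empirical normalization you describe in your last paragraph.

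\textbf{Why the paper needs the hypotheses and you do not.} The paper's conditional variance bound is $p[x](1-p[x])/R_t$ with $R_t$ random. It therefore uses Hoeffding and a union bound over $w$ to get $R_t\geq s_t/2$ with probability at least $1-1/(1+\chi^2(Q_1\|\Qbar_{\alpha_t}))$; this is where the lower condition on $n_t$ enters. On the bad event it falls back to the unwatermarked estimator. The constant $3$ is the factor $2$ lost on the good event plus $1$ from the fallback. Your deterministic step $1/S_t$ makes the stage-$t$ fluctuation a conditionally centered i.i.d.\ sum with second moment at most $2s_t$. The mean-squared-error recursion then closes with no concentration step and gives $2/S_t$.

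\textbf{The projection check.} It goes through as you expected. The released, projected $\Phat_{t-1}$ both generates batch $t$ and anchors the update, so the identity for the pre-projection error $\tilde E_t$ holds exactly. Nonexpansiveness of the Euclidean projection onto $\Delta_k$, together with $p\in\Delta_k$, gives $\|E_t\|_2\leq\|\tilde E_t\|_2$. Neither hypothesis in \eqref{eq:watermark-ub-conditions} is needed for your estimator.

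\textbf{Your guess about the upper condition.} It does not match the paper. The paper invokes $n_t\alpha_t\leq\sum_{i\leq t}n_i\alpha_i^2$ only, and informally, to keep estimates in $\Delta_k$. It does not use it to control step sizes or cross terms. Your projection handles simplex membership directly.

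\textbf{What the paper's route buys.} It stays inside the unbiased, uncorrelated linear-combination framework of Lemmas~\ref{lem:upper-bound-uncombined} and~\ref{lem:linear-combinations}. It tracks coordinate-wise variances $p[x](1-p[x])$, which gives an extra factor $1-\|p\|_2^2$. It also keeps the released estimate unbiased, which is what the masking construction requires of its pre-release estimate. Projection destroys unbiasedness. That is harmless for this theorem, but it would matter if you wanted to feed your estimator into masking.
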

	
	We note that the hypothesis conditions \eqref{eq:watermark-ub-conditions} can be dispensed with in favor of more complicated results and proofs: the first inequality allows us to match a loose lower bound via concentration of the denominator, and the second is necessary only to ensure that the corresponding estimates lie in $\Delta_k$ with probability $1$ (as opposed lying outside $\Delta_k$ with only exponentially small probability). From the above bounds, we see that when the memory term is small, the effective sample size for batch $i$ in the watermarking setting is $n_i\alpha_{i}^2\left(1+\chi^2\left(Q_1\|\bar Q_{\alpha_i}\right)\right)$.
	
	\subsection{Performance of the masked estimator}\label{sec:masking-results}
	The above upper and lower bounds do not match when the memory term is not small. Making progress towards closing this gap, we show that probabilistically masking the output estimate reduces this to a Jensen gap.
	
	\subsubsection{The masking process}Let $B_t$ be a (not necessarily i.i.d.)\ process such that $B_{t} \sim \mathrm{Ber}\left(b_t\right)$ for some sequence $b_t$. Let $\Phat_{t}^{(\mathrm{pre})}$ be an unbiased pre-release estimate of $p$. Sample $S_t\sim \Phat_{t}^{(\mathrm{pre})}$. The released estimate is given as
	\begin{align*}
		\Phat_{t}[x] = \begin{cases}
			\Phat_t^{(\mathrm{pre})}[x] & B_t = 0\\
			\openone\{S_t = x\} & B_t = 1
		\end{cases}.
	\end{align*}
	We refer to this as masking to refer to the probabilistic replacement of the pre-release estimate with the $k$ deterministic distributions $e_x, x\in \Xc$. When the estimate is indeed masked at time $t$, each synthetic sample in batch $t+1$ is equal to $S_{t}$.
	
	\subsubsection{Performance}
	For conciseness of expression and proof, we state the upper bound in the unassisted setting, applying the masking process to a sequence of simple linear estimators. An equivalent expression with the corresponding $\chi^2$ divergence terms for each stage $i$ can be shown by masking the sequence of estimators proposed in Theorem $2$ instead. 
	\begin{theorem} \label{thm:masking estimator}
		There exists a sequence of estimators $\phat_t$ such that, for each $t\geq0$,
		\begin{align}
			&\sup_p E_{\Pbar_{t, p}}\left[\ell_2^2\left(\Phat_{t}, p\right)\right] \leq 2\sigma^{2}_t \cdot (1+c),  \label{eq:masking-ub}\\ 
			&\sigma_t^2 := E \left[\frac{1-\sfrac{1}{k}}{n_0 + \sum\limits_{i=1}^{t} n_i\alpha_i^2 \left(1-B_{i-1}\right) + n_i\alpha_iB_{i-1}}\right] \label{eq:masking-ub-def}
		\end{align} for any process $(B_i)_{i\geq 0}$ with marginal distributions $B_{t} \sim \mathrm{Ber}\left(b_t\right)$, as long as $b_i \leq c\sigma^2_i $ for some constant $c>0$ for every $i$.
		
	\end{theorem}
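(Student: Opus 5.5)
The plan is to construct $\Phat_t^{(\mathrm{pre})}$ as a recursively updated linear (``debiasing'') estimator. Its normalizing weights are chosen according to whether the previous release was masked. The error $\Phat_t^{(\mathrm{pre})}-p$ should then be a weighted sum of martingale differences. I take the masking process $(B_i)$ to be independent of the samples; the $B_i$ are also part of the estimator's randomness $U_i$, so they are known at each stage. Everything is first computed conditionally on the whole path $(B_i)$, and the expectation over $B$ appearing in \eqref{eq:masking-ub-def} is taken at the end.

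\textbf{Construction.} Let $N_t\in\mathbb{N}^k$ be the count vector of batch $t$, and set $D_0=n_0$, $\Phat_0^{(\mathrm{pre})}=N_0/n_0$. For $t\geq1$, the samples of batch $t$ are i.i.d.\ from $\alpha_t p+(1-\alpha_t)\Phat_{t-1}$ given the past, where $\Phat_{t-1}$ is the released estimate. In particular $\Phat_{t-1}=e_{S_{t-1}}$ when $B_{t-1}=1$. Put $Z_t:=N_t-n_t(\alpha_t p+(1-\alpha_t)\Phat_{t-1})$, which is a martingale difference given the past. I define
\begin{align*}
D_t\Phat_t^{(\mathrm{pre})} = D_{t-1}\Phat_{t-1}^{(\mathrm{pre})} + \beta_t\bigl(N_t-n_t(1-\alpha_t)\Phat_{t-1}\bigr),
\end{align*}
with $\beta_t=\alpha_t$ and $D_t=D_{t-1}+n_t\alpha_t^2$ if $B_{t-1}=0$, and $\beta_t=1$ and $D_t=D_{t-1}+n_t\alpha_t$ if $B_{t-1}=1$. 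In both cases $\beta_t n_t\alpha_t=D_t-D_{t-1}$. This identity yields the clean error recursion $D_t(\Phat_t^{(\mathrm{pre})}-p)=D_{t-1}(\Phat_{t-1}^{(\mathrm{pre})}-p)+\beta_t Z_t$. Hence $D_t(\Phat_t^{(\mathrm{pre})}-p)=\sum_{i=0}^t\beta_iZ_i$, with $Z_0=N_0-n_0p$ and $\beta_0=1$.

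The key point is that subtracting the \emph{released} estimate, rather than the pre-release one, makes the masked case work. It also means that $e_{S_{t-1}}$ need not equal $\Phat_{t-1}^{(\mathrm{pre})}$. Two consequences follow. First, $\Phat_t^{(\mathrm{pre})}$ is unbiased given $B$, which the masking step requires. Second, since $D_t$ is determined by $B$, orthogonality of martingale increments gives
\begin{align*}
E\bigl[\ell_2^2(\Phat_t^{(\mathrm{pre})},p)\bigm|B\bigr]=\frac{1}{D_t^2}\sum_{i=0}^t\beta_i^2E\|Z_i\|^2 .
\end{align*}
Note that $D_t$ is exactly the denominator in \eqref{eq:masking-ub-def}.

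\textbf{Variance bounds.} For a single draw from a distribution $q$, $E\|e_X-q\|^2=1-\|q\|^2$. In the unmasked case this is at most $1$, so $\beta_i^2E\|Z_i\|^2\leq n_i\alpha_i^2$. In the masked case $q=\alpha_ip+(1-\alpha_i)e_S$, and $1-\|q\|^2\leq1-(1-\alpha_i+\alpha_ip[S])^2\leq2\alpha_i$, so $E\|Z_i\|^2\leq 2n_i\alpha_i$. For stage $0$ we have $E\|Z_0\|^2=n_0(1-\|p\|^2)\leq n_0(1-\sfrac1k)$. Together these give $E[\ell_2^2(\Phat_t^{(\mathrm{pre})},p)\mid B]\leq 2\cdot\sfrac{1}{D_t}$, up to the $(1-\sfrac1k)$ factor. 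Taking expectations over $B$ gives $2\sigma_t^2$.

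\textbf{Masking step and main obstacle.} It remains to account for the released estimate. With probability $b_t$ it is $e_{S_t}$, and $E\|e_{S_t}-p\|^2\leq 2$. More precisely, by unbiasedness $E\|e_{S_t}-p\|^2=1-2\langle\Phat_t^{(\mathrm{pre})},p\rangle+\|p\|^2$, which is $1-\|p\|^2$ plus a term of order $E\|\Phat_t^{(\mathrm{pre})}-p\|^2$. So $E\,\ell_2^2(\Phat_t,p)\leq 2\sigma_t^2+O(b_t)\leq 2\sigma_t^2(1+c)$ under $b_t\leq c\sigma_t^2$. I expect the main obstacle to be bookkeeping of constants. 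First, one has to make the factor $(1-\sfrac1k)$ in $\sigma_t^2$ hold uniformly across masked, unmasked and stage-$0$ contributions. A crude bound $1-\|q\|^2\leq1$ loses it, so I would instead use $1-\|q\|^2\leq 1-\sfrac1k$ for every $q\in\Delta_k$ and track the factor $2$ only in the masked term. Second, conditioning on $B_t$ when $B$ is not independent across time needs care. I would handle this by conditioning on the entire $B$ path throughout; the statement allows arbitrary dependence across time but only requires $B$ to be independent of the data.
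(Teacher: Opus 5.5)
Your construction is the same as the paper's. Each stage contributes the debiased estimate of Lemma~\ref{lem:upper-bound-uncombined}, formed with the \emph{released} $\Phat_{i-1}$. These conditionally mean-zero terms are combined with weights proportional to $n_i\alpha_i^2$ (unmasked) or $n_i\alpha_i$ (masked). The paper uses $n_i\alpha_i/2$ via Lemma~\ref{lem:linear-combinations}, which only moves the factor $2$ from the numerator to the denominator. Everything is done conditionally on the $B$-path, and masking costs $O(b_t)$. Your recursion $D_t(\Phat_t^{(\mathrm{pre})}-p)=\sum_i\beta_iZ_i$ is just an explicit form of that combination.

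\textbf{The gap: the $(1-\sfrac{1}{k})$ factor in the masked terms.} This is the step you flagged, and your proposed fix does not work as stated. With $q_i=\alpha_ip+(1-\alpha_i)e_{S_{i-1}}$,
\[
1-\|q_i\|^2=2\alpha_i-\alpha_i^2-\alpha_i^2\|p\|^2-2\alpha_i(1-\alpha_i)p[S_{i-1}] .
\]
For $k=2$ and small $\alpha_i$ and $p[S_{i-1}]$, this is close to $2\alpha_i>2\alpha_i(1-\sfrac{1}{2})$. So no pointwise bound gives $2\alpha_i(1-\sfrac{1}{k})$. Combining $1-\|q\|^2\le 1-\sfrac{1}{k}$ with $1-\|q\|^2\le 2\alpha_i$ only yields their minimum.

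The fix is to average over $S_{i-1}\sim\Phat^{(\mathrm{pre})}_{i-1}$, using the conditional unbiasedness you already established: $E[p[S_{i-1}]\mid B]=\|p\|^2$. This gives $E[1-\|q_i\|^2\mid B]=(2\alpha_i-\alpha_i^2)(1-\|p\|^2)\le 2\alpha_i(1-\sfrac{1}{k})$. This is exactly the paper's observation that $\openone\{S_{i-1}=x\}$ has variance $p[x](1-p[x])$, i.e.\ $\eta^2_{i-1}[x]=1$.

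Without this averaging, your bound is $\frac{2k}{k-1}\sigma_t^2+2b_t$, which misses the stated constant by a factor up to $2$. With it, $E[\ell_2^2(\Phat_t^{(\mathrm{pre})},p)\mid B]\le 2(1-\sfrac{1}{k})/D_t$. Then $2\sigma_t^2(1+c)$ follows even from your crude bound $E\|e_{S_t}-p\|^2\le 2$.

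A minor point: the cross term $-2\langle\Phat_t^{(\mathrm{pre})},p\rangle$ is linear in the pre-release estimate. So $E\|e_{S_t}-p\|^2=1-\|p\|^2$ exactly, and there is no extra term of order $E\|\Phat_t^{(\mathrm{pre})}-p\|^2$.

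\textbf{A caveat shared with the paper.} Both arguments need $\Phat_t^{(\mathrm{pre})}\in\Delta_k$, both for $S_t\sim\Phat_t^{(\mathrm{pre})}$ to make sense and for $q_i\in\Delta_k$. The debiased linear update can leave the simplex. For example, in an unmasked step the coefficient of $\Phat^{(\mathrm{pre})}_{t-1}$ is $(D_t-n_t\alpha_t)/D_t$, which is negative when $n_t\alpha_t>D_t$. Unlike Theorem~\ref{thm:watermark-ub}, this theorem imposes no analogue of the second condition in \eqref{eq:watermark-ub-conditions}. This gap is inherited from the paper rather than introduced by you.
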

	
	The proofs of these theorems can be found in Appendix \ref{sec:proofs}
	
	\section{Discussion}
	
	To get a sense of the impact(s) of watermarking and masking, we compare our results to the previously known unassisted and oracle-assisted minimax bounds. Via \cite[Theorems 1 and 2 and Lemma 1]{kanabar_model_2026}, the effective sample sizes at stage $t$ are $n_t\alpha_t^2 +n_t\alpha_{t} g_t(\sfrac{1}{4k})$ for the lower bound and $n_t\alpha_t^2$ for the upper bound in the unassisted setting, and $n_t\alpha_t$ in both senses for the oracle-assisted setting. Watermarking therefore improves the effective samples size of stage $t$ by a factor of $1+\chi^2\left(Q_1\middle\|\Qbar_{\alpha_t}\right)$.

	\subsection{Watermarking in recursive settings}
	Since perfect watermarking is exactly equivalent to the availability of oracle information $(Y_i^{n_i})_{i\in [0:t]}$, we can compare effective sample sizes to find how fragile the benefit of watermarking is; a `good' watermark would lead to an increase in effective sample size by a factor of $\Omega(1/\alpha_{i})$ at each stage $i$. Given the statistics of the watermark, we have the following Proposition bounding the magnitude of this increment.
	\begin{proposition}
		For any $\alpha\in [0, 1]$, Let $\uepsilon = \min_w \sfrac{Q_0[w]}{Q_1[w]}$ Then $\chi^2\left(Q_1\middle\|\bar Q_{\alpha}\right)$ is bounded from above as
		\begin{align}
			1+\chi^2\left(Q_1\middle\|\bar Q_{\alpha}\right) \leq \frac{1}{\alpha + \uepsilon - \alpha\uepsilon} \leq \frac{1}{\max \{\alpha, \uepsilon\}}
		\end{align}
	\end{proposition}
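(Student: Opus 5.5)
Write $\bar Q_\alpha = \alpha Q_1 + (1-\alpha) Q_0$. The plan is to bound each term of the $\chi^2$ sum pointwise using the definition of $\uepsilon$, and then sum.

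\textbf{Step 1 (rewrite).} Use the identity
\begin{align*}
1+\chi^2\left(Q_1\middle\|\bar Q_{\alpha}\right) = \sum_w \frac{Q_1[w]^2}{\bar Q_\alpha[w]} = \sum_w Q_1[w]\cdot\frac{Q_1[w]}{\alpha Q_1[w] + (1-\alpha)Q_0[w]}.
\end{align*}
Symbols with $Q_1[w]=0$ contribute nothing, so drop them.

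\textbf{Step 2 (pointwise bound).} For each remaining $w$, the definition of $\uepsilon$ gives $Q_0[w] \geq \uepsilon\, Q_1[w]$. Hence
\begin{align*}
\alpha Q_1[w] + (1-\alpha)Q_0[w] \geq \left(\alpha + (1-\alpha)\uepsilon\right) Q_1[w].
\end{align*}
Each ratio is therefore at most $1/(\alpha+\uepsilon-\alpha\uepsilon)$. Summing against $Q_1$, which sums to $1$, gives the first inequality.

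\textbf{Step 3 (second inequality).} Since $\alpha,\uepsilon\in[0,1]$, the remaining comparison reduces to two inequalities:
\begin{align*}
\alpha+\uepsilon-\alpha\uepsilon = \alpha + \uepsilon(1-\alpha) \geq \alpha, \qquad \alpha+\uepsilon-\alpha\uepsilon = \uepsilon + \alpha(1-\uepsilon) \geq \uepsilon.
\end{align*}
Together these give $\alpha+\uepsilon-\alpha\uepsilon \geq \max\{\alpha,\uepsilon\}$, and taking reciprocals yields the second inequality.

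\textbf{Edge cases and main obstacle.} There is no real obstacle; the only care needed is at the edge cases. First, $\uepsilon\le 1$ must be justified. If $\uepsilon>1$, then $Q_0[w]>Q_1[w]$ wherever $Q_1[w]>0$, so $Q_0$ would have total mass greater than $1$, which is impossible. Hence $\uepsilon \le 1$, provided the minimum is taken over $w$ with $Q_1[w]>0$. Second, if $\alpha=\uepsilon=0$, the bounds read $\infty$ and hold trivially.
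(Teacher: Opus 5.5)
Your proof is correct and is essentially the paper's argument: write $1+\chi^2(Q_1\|\bar Q_\alpha)=\sum_w Q_1[w]\cdot Q_1[w]/\bar Q_\alpha[w]$, use $Q_0[w]\ge \uepsilon\, Q_1[w]$ to bound each ratio by $1/(\alpha+(1-\alpha)\uepsilon)$, and sum against $Q_1$. You also spell out the second inequality, including the check that $\uepsilon\le 1$, and the degenerate case $\alpha=\uepsilon=0$, which the paper leaves implicit.
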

	\begin{IEEEproof}
		The proof is direct. We have\begin{align*}
			1+\chi^2\left(Q_1\middle\|\bar Q_{\alpha}\right) =& \sum_{w\in\Wc}Q_1[w]  \frac{Q_1[w]}{\alpha Q_1[w] +(1-\alpha)Q_0[w]} \\
			=& \sum_{w\in\Wc} Q_1[w] \frac{1}{\alpha + (1-\alpha) \frac{Q_0[w]}{Q_1[w]}}\\
			\leq & \frac{1}{\alpha + (1-\alpha)\uepsilon}\cdot\sum_{w\in\Wc}Q_1[w].
		\end{align*}
	\end{IEEEproof}
	This bound provides a concise condition on $Q_0$ and $Q_1$ for watermarking to be useful: $\uepsilon_t$ should decay at least as quickly as $\alpha_{t}$. Since $Q_1[w] \leq 1$, this implies that the watermarking detector should `misidentify' synthetic samples as real samples (via at least one high confidence `real' signal $w^*$) with smaller and smaller probabilities as real samples become rarer and the estimates get closer to the ground truth distribution. 
	
	We present the exact performance in two well-known cases, where the watermarking detection statistics describe the output of the sample provenances $Y$ when passed through two well-known channels: the BEC and the BSC.
	
	\begin{claim}[BEC]
		Let the conditional distributions $Q_0$ and $Q_1$ denote the output distributions of a binary erasure channel (BEC) with error probability $\epsilon$ given inputs $0$ and $1$ respectively. $\chi^2(Q_1\|\bar{Q}_{\alpha})$ is then given as $\sfrac{1-\epsilon}{\alpha}$.
	\end{claim}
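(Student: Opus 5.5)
The plan is a direct computation from the definition $\chi^2(Q_1\|\bar Q_\alpha)=\sum_{w}\sfrac{Q_1[w]^2}{\bar Q_\alpha[w]}-1$, in the same spirit as the proof of the preceding Proposition. First I fix the detector alphabet $\Wc=\{0,\mathrm{e},1\}$, where $\mathrm{e}$ is the erasure symbol. The BEC with erasure probability $\epsilon$ gives
\begin{align*}
Q_0=(1-\epsilon,\ \epsilon,\ 0),\qquad Q_1=(0,\ \epsilon,\ 1-\epsilon),
\end{align*}
on $(0,\mathrm{e},1)$. The mixture is therefore
\begin{align*}
\bar Q_\alpha=\alpha Q_1+(1-\alpha)Q_0=\bigl((1-\alpha)(1-\epsilon),\ \epsilon,\ \alpha(1-\epsilon)\bigr).
\end{align*}
The symbol $w=0$ has $Q_1[0]=0$, so it contributes nothing to the sum. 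This also means no absolute-continuity issue arises for $Q_1\ll\bar Q_\alpha$ whenever $\alpha>0$.

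Next I evaluate the two surviving terms. The erasure symbol contributes $\sfrac{\epsilon^2}{\epsilon}=\epsilon$, and the symbol $w=1$ contributes $\sfrac{(1-\epsilon)^2}{\alpha(1-\epsilon)}=\sfrac{(1-\epsilon)}{\alpha}$. Summing gives the closed form
\begin{align*}
1+\chi^2\left(Q_1\middle\|\bar Q_\alpha\right)=\epsilon+\frac{1-\epsilon}{\alpha},
\end{align*}
that is, $\chi^2(Q_1\|\bar Q_\alpha)=\frac{1-\epsilon}{\alpha}-(1-\epsilon)$. As a consistency check, the middle expression of the preceding Proposition with $\uepsilon=\min_w\sfrac{Q_0[w]}{Q_1[w]}=\epsilon$ is $\sfrac{1}{(\alpha+\epsilon-\alpha\epsilon)}$. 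One can verify that $\epsilon+\sfrac{(1-\epsilon)}{\alpha}\le \sfrac{1}{(\alpha+\epsilon-\alpha\epsilon)}$ by cross-multiplying. The boundary cases also agree: $\epsilon=0$ gives the oracle factor $\sfrac{1}{\alpha}$, and $\epsilon=1$ gives the unassisted factor $1$.

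The main obstacle is reconciling this computation with the Claim's wording. The computation yields exactly $(1-\epsilon)/\alpha$ for the sum $\sum_{w\neq \mathrm{e}}\sfrac{Q_1[w]^2}{\bar Q_\alpha[w]}$, which is the contribution of the unerased outputs. The full $\chi^2$ divergence differs from $(1-\epsilon)/\alpha$ by the additive constant $-(1-\epsilon)$. So the identity as literally stated holds only up to this bounded term, which is at most $1$. In the effective-sample-size factor $1+\chi^2$, this means the factor is $\sfrac{(1-\epsilon)}{\alpha}+\epsilon$, and the $\epsilon$ term is negligible exactly in the regime of interest, $\alpha\to 0$ with $\epsilon$ fixed.

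In writing up, I would present the exact expression above. I would then state explicitly that $(1-\epsilon)/\alpha$ is the value of the non-erasure part of the sum, equivalently $\chi^2$ up to an additive term in $[0,1]$. This is the sense in which the Claim's formula is correct and sufficient for comparing the watermarking gain $1+\chi^2$ against the oracle factor $1/\alpha$. I would not assert exact equality with $\chi^2$, since the direct computation contradicts it for every $\alpha<1$ and $\epsilon<1$.
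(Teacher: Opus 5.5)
The paper asserts this Claim without a derivation, so there is no proof of record to compare against. Your direct evaluation of $\sum_w Q_1[w]^2/\bar Q_\alpha[w]$ over $\{0,\mathrm{e},1\}$ is the natural route, the same one used for the preceding Proposition, and its main line is correct: $1+\chi^2(Q_1\|\bar Q_\alpha)=\epsilon+\frac{1-\epsilon}{\alpha}$, i.e.\ $\chi^2(Q_1\|\bar Q_\alpha)=\frac{(1-\epsilon)(1-\alpha)}{\alpha}$. You are right that the Claim is false as stated.

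Two refinements to your write-up of this part:
\begin{itemize}
\item The discrepancy $1-\epsilon$ does not depend on $\alpha$, so it persists for every $\alpha\in(0,1]$ when $\epsilon<1$, not only for $\alpha<1$. At $\alpha=1$ the true value is $0$, not $1-\epsilon$.
\item The paper's own BSC formula corroborates your value. Letting $\epsilon\to 0$ there, where both channels are noiseless, gives $\frac{1-\alpha}{\alpha}$, not $\frac{1}{\alpha}$.
\end{itemize}
The error is harmless for the paper's order-level conclusions. The per-stage effective sample sizes under the true and the claimed values, $n(\alpha^2\epsilon+\alpha(1-\epsilon))$ and $n(\alpha^2+\alpha(1-\epsilon))$, differ by at most a factor $2$. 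Rather than recasting $(1-\epsilon)/\alpha$ as ``the non-erasure part of the sum,'' just state the corrected identity.

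Your consistency check, however, is wrong and should be redone.
\begin{itemize}
\item For the BEC the ratios $Q_0[w]/Q_1[w]$ are $+\infty$, $1$, $0$ at $w=0,\mathrm{e},1$. So $\uepsilon=0$, not $\epsilon$: an unerased output $1$ certifies a real sample.
\item The inequality you say follows by cross-multiplying, $\epsilon+\frac{1-\epsilon}{\alpha}\le\frac{1}{\alpha+\epsilon-\alpha\epsilon}$, is false. At $\alpha=\epsilon=\tfrac12$ the two sides are $\tfrac32$ and $\tfrac43$. As $\alpha\to 0$ with $\epsilon$ fixed, the left side diverges while the right side stays at most $1/\epsilon$, so it would contradict the $(1-\epsilon)/\alpha$ growth you just computed.
\item With the correct $\uepsilon=0$, the Proposition gives $1+\chi^2\le 1/\alpha$. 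The check $\epsilon+\frac{1-\epsilon}{\alpha}\le\frac{1}{\alpha}$ is equivalent to $\alpha\epsilon\le\epsilon$, which holds.
\end{itemize}
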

	
	This is exactly the behavior we expect, since the detection outputs for synthetic samples are either $0$ or `error'.  This might be difficult to implement in practice: watermarks still have relatively high false negative rates (see \cite{pang_no_2024, zhang_watermarks_2024, gowal_synthid-image_2025}). A more realistic watermarking scheme is described by a BSC:
	
	\begin{claim}[BSC]
		For $Q_0$ and $Q_1$ describing the conditional output distributions of a binary symmetric channel (BSC) with crossover probability $\epsilon<\sfrac{1}{2}$, the corresponding $\chi^2(Q_1\|\bar{Q}_{\alpha})$ is given as
		\begin{align*}
			(1-2\epsilon) \cdot\left(\frac{1}{\frac{\alpha}{1-\alpha} + \frac{\epsilon}{1-\epsilon}} - \frac{1}{\frac{\alpha}{1-\alpha}+ \frac{1-\epsilon}{\epsilon}}\right) \leq \frac{1}{\max\{\alpha, \epsilon\}}.
		\end{align*}
	\end{claim}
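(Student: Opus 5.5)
The plan is to compute $\chi^2\left(Q_1\middle\|\bar Q_\alpha\right)$ directly from its definition on the two-letter output alphabet, and then get the inequality from the Proposition above. Label the detector outputs so that $Q_1 = (1-\epsilon, \epsilon)$ and $Q_0 = (\epsilon, 1-\epsilon)$.

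\textbf{Step 1 (the numerators).} I will use the form $\chi^2(Q_1\|\bar Q_\alpha)=\sum_w (Q_1[w]-\bar Q_\alpha[w])^2/\bar Q_\alpha[w]$. Since $Q_1-\bar Q_\alpha=(1-\alpha)(Q_1-Q_0)$ and $Q_1-Q_0=\pm(1-2\epsilon)$, both squared numerators equal $(1-\alpha)^2(1-2\epsilon)^2$.

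\textbf{Step 2 (factor the denominators).} Set $a=\alpha/(1-\alpha)$ and $r=\epsilon/(1-\epsilon)$. The two marginals factor as
\begin{align*}
\bar Q_\alpha[1]=(1-\alpha)(1-\epsilon)(a+r), \qquad \bar Q_\alpha[0]=(1-\alpha)\epsilon(a+1/r).
\end{align*}
Writing $A=a+r$ and $B=a+1/r$, this gives
\begin{align*}
\chi^2=(1-\alpha)(1-2\epsilon)^2\,\frac{\epsilon B+(1-\epsilon)A}{\epsilon(1-\epsilon)AB}.
\end{align*}

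\textbf{Step 3 (simplify and match).} A short computation gives $\epsilon B+(1-\epsilon)A=a+1=1/(1-\alpha)$, so
\begin{align*}
\chi^2=\frac{(1-2\epsilon)^2}{\epsilon(1-\epsilon)AB}.
\end{align*}
The claimed expression is $(1-2\epsilon)(1/A-1/B)=(1-2\epsilon)(B-A)/(AB)$. Here $B-A=1/r-r$, and
\begin{align*}
\frac{1}{r}-r=\frac{(1-\epsilon)^2-\epsilon^2}{\epsilon(1-\epsilon)}=\frac{1-2\epsilon}{\epsilon(1-\epsilon)},
\end{align*}
so the two expressions coincide. This algebraic matching is the only place where care is needed. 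I expect it to be the main, if modest, obstacle: the target is written as a difference of reciprocals rather than the natural sum, so the labelling of outputs must be kept consistent throughout.

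\textbf{Step 4 (the inequality).} Apply the Proposition with $\uepsilon=\min\{\epsilon/(1-\epsilon),(1-\epsilon)/\epsilon\}=\epsilon/(1-\epsilon)$, which holds because $\epsilon<1/2$. This gives
\begin{align*}
\chi^2\le 1+\chi^2\le \frac{1}{\max\{\alpha,\epsilon/(1-\epsilon)\}}.
\end{align*}
Since $\epsilon/(1-\epsilon)\ge\epsilon$, the right-hand side is at most $1/\max\{\alpha,\epsilon\}$, as claimed.
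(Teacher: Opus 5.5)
Your proposal is correct and matches the route the paper implicitly intends; the paper gives no written proof of this claim, but the bound $1/\max\{\alpha,\epsilon\}$ clearly comes from the preceding Proposition, and your factorizations of $\bar Q_\alpha[0]$ and $\bar Q_\alpha[1]$, the identity $\epsilon B+(1-\epsilon)A=1/(1-\alpha)$, and $B-A=(1-2\epsilon)/(\epsilon(1-\epsilon))$ all check out, so the closed form agrees. The inequality then follows as you say, using the Proposition with $\uepsilon=\epsilon/(1-\epsilon)\ge\epsilon$ for $\epsilon<1/2$ together with $\chi^2\le 1+\chi^2$.
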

	Thus, whenever the probability of misidentification is not very small (here, when $\epsilon$ is not close to $\alpha$), watermarking only increases the effective sample size very marginally.

	\subsection{Masking estimators} \label{sec:masking-discussion}
	
	In a large selection of regimes, the term $h_t(\sfrac{1}{4k})$ (and similarly, $h_t^{(\mathrm{w})}(\sfrac{1}{4k})$) is dominated by the other term in the denominator \cite[Propositions 1 and 2]{kanabar_model_2026}. However, we see that there is a gap between the upper and lower bounds whenever the memory term $h_t(\sfrac{1}{4k})$ dominates the effective sample size. 
	
	\subsubsection{The linear estimation gap}
	Using Chebyshev's inequality, we can upper bound the memory term $g_i(\rho) \leq 16k^2\sigma^2_{i-1}$ and consequently, upper bound the total effective sample size by $n_0 + \sum_{i=1}^t16k^2n_i\alpha_{i}\sigma_{i-1}^2$.
	
	Now, any sequence of linear estimators of $p$ will lead to subgaussian estimates $\Phat_i^{(\mathrm{w})}$ (since the empirical distributions of each batch are subgaussian). Additionally, the bias of any such estimate has to be $\mathcal{O}(\var (\Phat_{i}^{(\mathrm{w})}))$, which is $o(1)$, i.e.\ the mean of these estimates is at an $o(1)$ distance from $p$. Consequently, the corresponding memory term $g_i$---the probability of deviating from $p$ by a constant distance $\sfrac{1}{4k}$---will be exponentially small due to the concentration of these subgaussians around their mean. If $\alpha_{i}$ (asymptotically) decreases quicker than $\sigma_{i-1}^2$, using linear estimators cannot lead to an effective sample size larger than $\sum_{i=0}^t n_i\alpha_{i}^2$, while our Chebyshev upper bound is of the order $n_0+ \sum_{i=1}^tn_i\alpha_{i}\sigma^2_{i-1}$.	
	We show that the non-linearity introduced by the masking procedure narrows this to a Jensen gap.
	
	\subsubsection{Partially bridging the gap via masking}Notice that when $b_i$ is indeed equal to $\sigma^2_i$ (or equivalently within constant factors), the expected value of the denominator in the RHS of \eqref{eq:masking-ub-def} is
	\begin{align}
		E\left[n_0 + \sum\limits_{i=1}^{t} n_i\alpha_i^2 \left(1-B_{i-1}\right) + n_i\alpha_iB_{i-1}\right]
		= \sum_{i=0}^tn_i\alpha_i^2 + \sum_{i=1}^t n_i\alpha_{i} \sigma_i^2.
	\end{align} While on the wrong side of Jensen's inequality, the loss of the sequence of masking estimators converges to the corresponding lower bound in \cite[Theorem 1]{kanabar_model_2026} in the regimes where this denominator concentrates around its expected value, thereby closing the gap. We conjecture that a more careful analysis which accounts for sample path averages over each of the events in the memory term can further close this gap by raising the lower bound for the regimes in which concentration does not occur.
	
	\subsubsection{Masking with distributions of support larger than $1$}
	For some alphabet $x$, let the released estimate $\Phat_{i-1}[x] = 0$. Conditioned on $x$ getting masked in this way, the marginal probability of drawing $x$ at stage $i$ is given as $\alpha_i p[x]$. The `corrected' empirical estimate $\frac{1}{\alpha_i}\phat_{\mathrm{emp}}[x]$ has variance exactly
	\begin{align}
		\frac{1}{n_i^2\alpha_{i}^2} \cdot n_i\cdot \alpha_{i}p[x](1-\alpha_{i}p[x]) = \frac{p[x](1-\alpha_{i}p[x])}{n_i\alpha_{i}},
	\end{align}which directly raises the effective sample size for the estimate of $p[x]$ from $n_i\alpha_i^2$ to $n_i\alpha_{i}$. Therefore, masking each estimate with distributions that have support size $1<\barbelow{k} < k$ would also lead to a similar upper bound, provided that masking is done often enough for each alphabet.
	We persist with masking using distributions of support size $1$ for simplicity of proof.
	
	\subsubsection{Masking as a watermark}
	Similar to implementations in text generation, masking itself acts as a watermark: given $\Phat_{t-1}[x] = 0$, all samples in stage $t$ with value $x$ can be directly classified as real with probability $1$. It does, however, distort the synthetic distribution, and therefore offers a more limited advantage compared to `invisible' watermarks. Practicality is further limited by the fact that our modeling assumption that only samples from $\Phat_{t-1}$ are mixed in at stage $t$ might no longer be valid in realistic settings.

	\bibliographystyle{ieeetr}
	\bibliography{ref_itw}

	\appendices
	
	\section{Proof of the Lower Bound} \label{sec:proofs}

	\subsection{The lower bound for watermarking}
	
	As in \cite{kanabar_model_2026}, we use the Paninski construction \cite{paninski_coincidence-based_2008} along with Assouad's method (see e.g.\ \cite{yu_assouad_1997}) to find lower bounds in the various settings.

	\begin{lemma}[Restated Lemma 7 from \cite{kanabar_model_2026}]\label{lem:lower-bound-partial}
		Let $Z \sim \Pbar_{t, p}$. Consider the set of vectors $\Vc = \{+1, -1\}^{k/2}$. With each vector $v\in\Vc$, associate the distribution
		\begin{align}
			p_v := p_u + \frac{\delta}{k}\begin{bmatrix}
				v\\-v
			\end{bmatrix} \label{eq:hard-distributions}
		\end{align} where $p_u$ is the uniform distribution. Let $\Pc_k:= \{p_v:v\in \Vc\}$.
		The $\ell_2^2$ minimax estimation risk of an estimator $\Phat_t  = \phat_t(Z)$  as described in Definition \ref{def:minimax-loss} is lower bounded as
		\begin{align}
			r_{t, k} \geq \frac{\delta^2}{4k}\left(1 {-} \sqrt{\frac{2}{k}\sum_{j=1}^{k/2} \max_{\substack{v\in\Vc:\\v_j=1}} D\Big(\Pbar_{t, p_v} \Big\| \Pbar_{t, p_{v-2e_j}}\Big)}\right) \label{eq:lower-bound-l2-kl-main}
		\end{align}
	\end{lemma}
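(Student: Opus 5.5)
The plan is to run Assouad's method directly on the Paninski family $\Pc_k$. Throughout, $k$ is even and $\delta\in[0,1]$, so that every $p_v$ lies in $\Delta_k$. First I restrict the supremum in Definition \ref{def:minimax-loss} to $\Pc_k$ and lower bound it by the average over a uniform prior on $\Vc$:
\begin{align*}
\sup_{p} E_{\Pbar_{t,p}}\left[\ell_2^2(\Phat_t,p)\right] \geq \frac{1}{|\Vc|}\sum_{v\in\Vc} E_{\Pbar_{t,p_v}}\left[\ell_2^2(\Phat_t,p_v)\right].
\end{align*}
The construction only matters through the fact that, for every $v$ and every sample path, the law of the whole sequence $(U_i,X_i^{n_i})_{i\in[0:t]}$ is $\Pbar_{t,p_v}$. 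Since I treat each $\Pbar_{t,p_v}$ as a black-box distribution of the data, the recursion never has to be opened up.

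Next I decompose the loss over the $k/2$ coordinate pairs $(j, j+k/2)$. For each pair I define the induced sign estimate
\begin{align*}
\hat V_j = \mathrm{sign}\left(\Phat_t[j]-\Phat_t[j+k/2]\right),
\end{align*}
with ties broken arbitrarily. Under $p_v$ we have $p_v[j]-p_v[j+k/2] = 2\delta v_j/k$. If $\hat V_j\neq v_j$, then $\Phat_t[j]-\Phat_t[j+k/2]$ lies on the wrong side of zero. The elementary inequality $a^2+b^2\geq (a-b)^2/2$ then gives
\begin{align*}
(\Phat_t[j]-p_v[j])^2+(\Phat_t[j+k/2]-p_v[j+k/2])^2 \geq \frac{2\delta^2}{k^2}\openone\{\hat V_j\neq v_j\}.
\end{align*}
Summing over $j$ yields $\ell_2^2(\Phat_t,p_v)\geq \frac{2\delta^2}{k^2}\sum_j \openone\{\hat V_j\neq v_j\}$, which is the separation condition Assouad's lemma needs.

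For each $j$ I pair $v$ (with $v_j=1$) with $v-2e_j$. By Le Cam's two-point bound, the sum of the two error probabilities of any test is at least $1-\mathrm{TV}(\Pbar_{t,p_v},\Pbar_{t,p_{v-2e_j}})$. Averaging over the prior gives
\begin{align*}
\text{average risk} \geq \frac{2\delta^2}{k^2}\cdot\frac{1}{2}\sum_{j=1}^{k/2}\Big(1-\max_{v:v_j=1}\mathrm{TV}\big(\Pbar_{t,p_v},\Pbar_{t,p_{v-2e_j}}\big)\Big).
\end{align*}
I then apply Pinsker, $\mathrm{TV}\leq\sqrt{D/2}\leq\sqrt{D}$, and Jensen's inequality for the concave square root over the uniform average in $j$:
\begin{align*}
\frac{2}{k}\sum_j\sqrt{D_j}\leq\sqrt{\frac{2}{k}\sum_j D_j}.
\end{align*}
This gives
\begin{align*}
r_{t,k}\geq\frac{\delta^2}{2k}\left(1-\sqrt{\frac{2}{k}\sum_{j}\max_{v:v_j=1}D_j}\right),
\end{align*}
which implies \eqref{eq:lower-bound-l2-kl-main} (the stated constant $\tfrac14$ leaves slack).

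The main point requiring care is the handling of the maximum over $v$ inside the Assouad pairing. Each pair's TV must be bounded by the worst case over $v$ with $v_j=1$ before averaging, so the bound holds uniformly in $v$. I also have to make sure that randomization $U_t$ and the adaptivity of the recursion do not interfere. They do not, because $\hat V_j$ is a measurable function of the full data $Z$, whose laws are exactly the $\Pbar_{t,p_v}$. Everything else in the argument is routine bookkeeping.
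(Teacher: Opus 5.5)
Your proposal is correct and follows the route the paper relies on: the lemma is imported from the earlier work, where it comes from Assouad's method on the Paninski family followed by Pinsker and Jensen. Your per-pair separation $\ell_2^2(\Phat_t,p_v)\geq \frac{2\delta^2}{k^2}\sum_j \openone\{\hat V_j\neq v_j\}$ even yields the sharper prefactor $\delta^2/(2k)$, which implies the stated $\delta^2/(4k)$ version when $1-\sqrt{\cdot}\geq 0$; when $1-\sqrt{\cdot}<0$ the stated right-hand side is negative and the bound holds trivially.
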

	We are now ready to prove the lower bound.

	\begin{IEEEproof}[Proof of Theorem \ref{thm:watermark-lb}]
		The proof proceeds by showing an upper bound on $D\left(\Pbar_{t, p_v}^{(\mathrm{w})}\middle\|\Pbar_{t, p_{v-2e_j}}^{(\mathrm{w})}\right)$, with $p_v$ as defined in \eqref{eq:hard-distributions}. Define the conditional distributions (with some abuse of notation)
		\begin{align*}
			\tilde{\mathbf{P}}^{(\mathrm{w})}_{t, p}\left((x_t, w_t)^{n_t}|(x_i, w_i)^{n_i}_{i\in [0:t-1]}\right):=&\Pbar_{t, p}^{(\mathrm{w})}\left\{(X_t, W_t)^{n_t} = (x_t, w_t)^{n_t} \middle|(X_i, W_i)^{n_i}_{i\in [0:t-1]} = (x_i, w_i)^{n_i}_{i\in [0:t-1]}\right\},\\
			\tilde{P}^{(\mathrm{w})}_{t, p}\left(x_t, w_t|(x_i, w_i)^{n_i}_{i\in [0:t-1]}\right):=&\Pbar_{t, p}^{(\mathrm{w})}\left\{(X_t[j], W_t[j]) = (x_t, w_t) \middle|(X_i, W_i)^{n_i}_{i\in [0:t-1]} = (x_i, w_i)^{n_i}_{i\in [0:t-1]}\right\}& j\in [1:n_t],\\
			\text{and }\tilde{P}^{(\mathrm{w})}_{t, p, w_t}\left(x_t\middle|(x_i, w_i)^{n_i}_{i\in [0:t-1]}\right) :=& \Pbar_{t, p}^{(\mathrm{w})}\left\{X_t[j] = x_t \middle|(X_i, W_i)^{n_i}_{i\in [0:t-1]} = (x_i, w_i)^{n_i}_{i\in [0:t-1]}, W_t[j] = w_t\right\}& j\in [1:n_t].
		\end{align*} Then
		\begin{align*}
			D\left(\Pbar_{t, p_v}^{(\mathrm{w})}\middle\|\Pbar_{t, p_{v-2e_j}}^{(\mathrm{w})}\right) \overset{(a)}{=}& \sum_{i=0}^t D\left(\tilde{\mathbf{P}}_{i, p_v}^{(\mathrm{w})}\middle\|\tilde{\mathbf{P}}_{i, p_{v-2e_j}}^{(\mathrm{w})}\middle|\Pbar_{i-1, p_v}^{(\mathrm{w})}\right)\\
			=& \sum_{i=0}^t n_i E_{\Pbar_{i-1, p_v}^{(\mathrm{w})}}\left[D\left(\tilde{P}_{i, p_v}^{(\mathrm{w})}\middle\|\tilde{P}_{i, p_{v-2e_j}}^{(\mathrm{w})}\right)\right]\\
			=& \sum_{i=0}^t n_i E_{\Pbar_{i-1, p_v}^{(\mathrm{w})}}\left[E_{W\sim \bar{Q}_{\alpha_{t}}}\left[D\left(\tilde{P}_{i, p_v, W}^{(\mathrm{w})}\middle\|\tilde{P}_{i, p_{v-2e_j}, W}^{(\mathrm{w})}\right)\right]\right]\\
			\overset{(b)}{=}& \sum_{i=0}^t E_{W\sim \bar{Q}_{\alpha_{t}}}\left[n_i E_{\Pbar_{i-1, p_v}^{(\mathrm{w})}}\left[D\left(\tilde{P}_{i, p_v, W}^{(\mathrm{w})}\middle\|\tilde{P}_{i, p_{v-2e_j}, W}^{(\mathrm{w})}\right)\right]\right] \numberthis\label{eq:watermark-kl-ub}
		\end{align*} where $(a)$ follows from the chain rule of KL Divergences and $(b)$ follows because of the independence of the distribution of the watermark and the samples themselves. Observe that
		\begin{align}
		\tilde{P}^{(\mathrm{w})}_{t, p, w} =\tilde \alpha_{i, w} p + (1-\tilde{\alpha}_{i, w})\Phat_{t-1}^{(\mathrm{w})} \label{eq:watermarking-mixing}
		\end{align}where \begin{align}
		\tilde{\alpha}^{(\mathrm{w})}_{i, w} := & \frac{\alpha_{i} Q_1[w]}{\bar Q _{\alpha_{i}}[w]} & i\in[0:t].
		\end{align}Thus, conditioned on each watermark detection output $W_t[j] = w_t$, the distributions of the samples $X_t[j]$ are equal to the marginal distributions of unwatermarked samples with equivalent mixing factors $\tilde{\alpha}^{(\mathrm{w})}_{i, w_t}$. We can therefore directly use results from the unwatermarked setting to bound \eqref{eq:watermark-kl-ub}.
		From \cite[Equation 17]{kanabar_model_2026}, for each $W=w$,
		\begin{align}
			n_i E_{\Pbar_{i-1, p_v}^{(\mathrm{w})}}\left[D\left(\tilde{P}_{i, p_v, W}^{(\mathrm{w})}\middle\|\tilde{P}_{i, p_{v-2e_j}, W}^{(\mathrm{w})}\right)\right] \leq \frac{8\delta^2}{k\left(\frac{9}{16} - \delta^2\right)}\left( n_i{\left(\tilde{\alpha}_{i, w}^{(\mathrm{w})}\right)^2} + n_i\tilde{\alpha}^{(\mathrm{w})}_{i, w} g_i (\sfrac{1}{4k})\right).
		\end{align} Consequently,
		\begin{align*}
			D\left(\Pbar_{t, p_v}^{(\mathrm{w})}\middle\|\Pbar_{t, p_{v-2e_j}}^{(\mathrm{w})}\right) \leq \frac{8\delta^2}{k\left(\frac{9}{16} - \delta^2\right)}E_{W\sim \bar{Q}_{\alpha_{t}}}\left[\sum_{i=0}^t n_i{\left(\tilde{\alpha}_{i, W}^{(\mathrm{w})}\right)^2} + n_i\tilde{\alpha}^{(\mathrm{w})}_{i, W} g_i (\sfrac{1}{4k}) \right].
		\end{align*} Since
		\begin{align}
			E_{W\sim \bar{Q}_{\alpha_{t}}}\left[\left(\tilde{\alpha}^{(\mathrm{w})}_{i, W}\right)^2\right] =&  E_{W\sim\Qbar_{\alpha_t}}\left[\alpha_t^2\left(\frac{Q_1[w]}{\Qbar_{\alpha_t}[x]}\right)^2\right] = \alpha_{i}^2\left(1+\chi^2\left(Q_1\middle\|\bar Q_{\alpha_{i}}\right)\right) \\
			\text{and } E_{W\sim \bar{Q}_{\alpha_{t}}}\left[\tilde{\alpha}^{(\mathrm{w})}_{i, W}\right] =& E_{W\sim \Qbar_{\alpha_t}}\left[\alpha_{t}\frac{Q_1[w]}{\Qbar_{\alpha_t}[x]}\right] =  \alpha_{i},
		\end{align} choosing \begin{align}
			\delta^2 = \frac{k}{64\sum_{i=0}^t \left(n_i\alpha_i ^2\left(1+\chi^2\left(Q_1\middle\|\bar Q_{\alpha_{i}}\right)\right)  + n_i\alpha_{i}g_i(\sfrac{1}{4k})\right)}\label{eq:lb-delta-def}
		\end{align}gives us
		\begin{align*}
			D\left(\Pbar_{t, p_v}^{(\mathrm{w})}\middle\|\Pbar_{t, p_{v-2e_j}}^{(\mathrm{w})}\right) \leq \frac{1}{4}
		\end{align*}whenever $\sum_{i=0}^{t}n_i\alpha_i^2\left(1+\chi^2\left(Q_1\middle\|\bar Q_{\alpha_{i}}\right)\right) + h_t^{(\mathrm{w})}(\sfrac{1}{4k})\geq k/4$. Invoking Lemma \ref{lem:lower-bound-partial} then leads to a lower bound of 
		\begin{align}
			r^{(\mathrm{w})}_{t, k} \geq \frac{1}{512\sum_{i=0}^t n_i\alpha_i ^2\left(1+\chi^2\left(Q_1\middle\|\bar Q_{\alpha_{i}}\right)\right)  + h_t(\sfrac{1}{4k})},
		\end{align} completing the proof.
	\end{IEEEproof}
	
	\section{Proofs of the Upper Bounds}
	
	\subsection{Useful Lemmas}
	
	We also have the following Lemma showing the performance of a linear estimator in the presence of recursive sampling.
	
	\begin{lemma}[Restated Lemma 9 from \cite{kanabar_model_2026}]\label{lem:upper-bound-uncombined}
		Let $\Phat_{0}$ be an unbiased estimate of $p$ with a variance $\eta^2[x]\cdot p[x](1-p[x])$ for each component $x\in\Xc$. Let samples $Z^{n} \sim (\alpha p + (1-\alpha)\Phat_0)^n$. Denote the empirical distribution of samples $X^n$ as $\phat_{\mathrm{emp}}(X^n)$. Then the intermediate estimate
		\begin{align*}
			\Phat_{1, \mathrm{int}} =\phat_{1, \mathrm{int}}(X_1^{n_1}) := \frac{1}{\alpha}\left(\phat_{\mathrm{emp}}(X_1^{n_1}) - (1-\alpha)\Phat_0\right) \numberthis\label{eq:conditionally-unbiased-estimator}
		\end{align*}has expectation $p$ and is component-wise uncorrelated with $\Phat_0$.
		Additionally, each component $x\in \Xc$ has variance\begin{align*}
			E\left[\left(\Phat_{1, \mathrm{int}}[x] {-} p[x]\right)^2\right] = \frac{p[x](1{-}p[x])}{n_1\alpha^2}\left(1 {-} (1{-}\alpha)^2\eta^2[x]\right). \numberthis\label{eq:self-consuming-upper-bound}
		\end{align*}
	\end{lemma}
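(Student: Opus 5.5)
The goal is Lemma \ref{lem:upper-bound-uncombined}: unbiasedness of $\Phat_{1,\mathrm{int}}$, its component-wise uncorrelatedness with $\Phat_0$, and the variance formula \eqref{eq:self-consuming-upper-bound}. Every claim follows from conditioning on $\Phat_0$ and using the laws of total expectation, covariance and variance. Throughout, $X_1^{n_1}$ are the samples, drawn i.i.d.\ from $\alpha p + (1-\alpha)\Phat_0$ given $\Phat_0$.

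\emph{First moment.} Conditioned on $\Phat_0$, each $\phat_{\mathrm{emp}}(X_1^{n_1})[x]$ is a $\mathrm{Bin}(n_1, q[x])/n_1$ variable with $q[x] := \alpha p[x] + (1-\alpha)\Phat_0[x]$. Hence $E[\phat_{\mathrm{emp}}[x]\mid\Phat_0] = q[x]$. Substituting into \eqref{eq:conditionally-unbiased-estimator} gives $E[\Phat_{1,\mathrm{int}}[x]\mid \Phat_0] = p[x]$ exactly, for every realization of $\Phat_0$.

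\emph{Uncorrelatedness.} Because the conditional mean is the constant $p[x]$, the tower rule gives unconditional unbiasedness. The same fact gives
\begin{align*}
E\left[(\Phat_{1,\mathrm{int}}[x]-p[x])(\Phat_0[x]-p[x])\right] = E\left[(\Phat_0[x]-p[x])\,E\left[\Phat_{1,\mathrm{int}}[x]-p[x]\,\middle|\,\Phat_0\right]\right]=0.
\end{align*}
This holds for each component, and in fact for any pair of components.

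\emph{Variance.} By the law of total variance, and since the conditional mean is constant, the variance equals the expected conditional variance:
\begin{align*}
\frac{1}{\alpha^2}E\left[\frac{q[x](1-q[x])}{n_1}\right].
\end{align*}
To evaluate $E[q(1-q)] = E[q] - E[q^2]$, use $E[q]=p[x]$ and $E[q^2] = p[x]^2 + (1-\alpha)^2\var(\Phat_0[x])$. With $\var(\Phat_0[x]) = \eta^2[x]p[x](1-p[x])$, this gives
\begin{align*}
E[q(1-q)] = p[x](1-p[x])\left(1-(1-\alpha)^2\eta^2[x]\right).
\end{align*}
Dividing by $n_1\alpha^2$ yields \eqref{eq:self-consuming-upper-bound}.

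The only step needing care is the conditional-variance expansion, in particular keeping the $(1-\alpha)^2$ factor and the sign from $-E[q^2]$ straight. One also has to read the samples as $X_1^{n_1}$ (the statement writes $Z^n$) and confirm that $q\in\Delta_k$, so that the binomial model is valid. This holds whenever $\Phat_0\in\Delta_k$, which is implicitly assumed since $\Phat_0$ is used as a sampling distribution.
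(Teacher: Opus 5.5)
Your proof is correct. Conditioning on $\Phat_0$ makes the conditional mean of $\Phat_{1,\mathrm{int}}[x]$ exactly $p[x]$, which gives unbiasedness and, by the tower rule, zero covariance with $\Phat_0$; the law of total variance with $E\left[q[x](1-q[x])\right] = p[x](1-p[x])\left(1-(1-\alpha)^2\eta^2[x]\right)$ then gives the stated variance.

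The paper does not prove this lemma itself (it restates it from the authors' earlier work), so there is no in-paper proof to compare against. Your conditioning argument is the natural one, and its only implicit assumptions are $\alpha>0$ (needed for the estimator to be defined) and $\Phat_0\in\Delta_k$, the latter of which you already flag.
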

	
	Finally, we state the following elementary lemma that will be useful in combining estimates when upper bounds on their variances are known.
	\begin{lemma} \label{lem:linear-combinations}
		Let $\hat{\theta}_1$ and $\hat{\theta}_2$ be two unbiased and uncorrelated estimates of a parameter $\theta \in \Theta$, and let their variances be bounded above by $\sigma_1^2$ and $\sigma_2^2$. We can then construct a linear combination $\thetab$ of $\hat{\theta}_1$ and $\hat{\theta}_2$ that is unbiased and with variance $\sigmab^2$ bounded above as $\sigmab^2 \leq \sfrac{1}{\left(\frac{1}{\sigma_1^2} + \frac{1}{\sigma_2^2}\right)}.$
	\end{lemma}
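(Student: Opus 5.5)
The plan is to take the inverse-variance weighted combination. Set
\begin{align*}
\thetab = \lambda\hat\theta_1 + (1-\lambda)\hat\theta_2, \qquad \lambda = \frac{\sigma_2^2}{\sigma_1^2+\sigma_2^2} = \frac{1/\sigma_1^2}{1/\sigma_1^2 + 1/\sigma_2^2}.
\end{align*}
The weights are chosen from the known upper bounds $\sigma_1^2,\sigma_2^2$, not from the true variances, which may be unknown. This is what makes the combination a legitimate estimator.

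\emph{Unbiasedness.} The coefficients sum to one, so linearity of expectation gives $E[\thetab] = \lambda\theta + (1-\lambda)\theta = \theta$.

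\emph{Variance.} Because $\hat\theta_1$ and $\hat\theta_2$ are uncorrelated, the cross term vanishes:
\begin{align*}
\var(\thetab) = \lambda^2\var(\hat\theta_1) + (1-\lambda)^2\var(\hat\theta_2) \leq \lambda^2\sigma_1^2 + (1-\lambda)^2\sigma_2^2 .
\end{align*}
Substituting $\lambda$, the right-hand side becomes
\begin{align*}
\frac{\sigma_2^4\sigma_1^2 + \sigma_1^4\sigma_2^2}{(\sigma_1^2+\sigma_2^2)^2} = \frac{\sigma_1^2\sigma_2^2}{\sigma_1^2+\sigma_2^2} = \frac{1}{\frac{1}{\sigma_1^2}+\frac{1}{\sigma_2^2}} ,
\end{align*}
which is the claimed bound.

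\emph{Degenerate and vector cases.} If one bound is zero, say $\sigma_1^2=0$, take $\lambda=1$; the bound then reads $0$, consistent with the convention $1/0=\infty$. If $\theta$ is vector-valued, as it will be when the lemma is applied componentwise to $p[x]$, the same argument applies coordinate by coordinate. One can also work with the total $\ell_2^2$ risk using a single scalar $\lambda$, since uncorrelatedness is only needed componentwise.

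There is no real obstacle here. The only care needed is that $\lambda$ must depend solely on the bounds, so that $\thetab$ is a valid estimator. Minimality of this $\lambda$ is not required; plugging it in suffices.
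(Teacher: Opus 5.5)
Your proof is correct. The paper states this lemma as elementary and gives no proof, and your inverse-variance weighting $\lambda = \sigma_2^2/(\sigma_1^2+\sigma_2^2)$ is exactly the construction the paper relies on; compare the weights $N_t[w](\tilde\alpha^{(\mathrm{w})}_{t,w})^2/\sum_{w'}N_t[w'](\tilde\alpha^{(\mathrm{w})}_{t,w'})^2$ in the proof of Theorem~\ref{thm:watermark-ub}. One refinement to your computability remark: in the paper's applications the bounds themselves contain the unknown factor $p[x](1-p[x])$, but this factor is common to both bounds and cancels in $\lambda$, so the combined estimator can still be implemented.
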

	
	\subsection{Proof of the watermarking upper bound}
	
	\begin{IEEEproof}[Proof of Theorem \ref{thm:watermark-ub}]
		The proof proceeds inductively. Assume that at stage $t-1$, each component of $\Phat_{t-1}$ is unbiased and has variance upper bounded as
		\begin{align*}
			\var \left(\Phat^{(\mathrm{w})}_{t-1}[x]\right) \leq \sigmab_{t-1}^2[x] = \frac{3 \cdot p[x](1-p[x])}{\sum_{i=0}^t n_i\alpha_i^2(1+\chi^2\left(Q_1\middle\|\bar Q_{\alpha_{i}}\right))}.
		\end{align*}It is easy to verify that the empirical estimator satisfies these conditions for $t=0$. 
		
		Let the number of samples in batch $t$ watermarked as $w$ be $N_{t}[w]$, and let the corresponding samples be $X_{t, w}^{N_{t}[w]}$. Recall from \eqref{eq:watermarking-mixing} that given $\Phat_{t-1}^{(\mathrm{w})}$ and $W_t = w$, $X_{t, w}^{N_{t}[w]}$ is a conditionally i.i.d.\ vector with marginal distributions $\alphat^{(\mathrm{w})}_{t, w} p + (1-\alphat_{t, w}^{(\mathrm{w})})\Phat_{t-1}^{(\mathrm{w})}$. Therefore, invoking Lemma \ref{lem:upper-bound-uncombined}, for each $x\in \Xc$, the conditional variance of the $x^{\mathrm{th}}$ component of intermediate estimates $\Phat_{t, w, \mathrm{int}} := \phat_{t, \mathrm{int}}\left(X_{t, w}^{N_{t}[w]}\right)$ given $W_t^{n_t}$ and $\Phat^{(\mathrm{w})}_{t-1}$ is bounded above by
		\begin{align*}
			\var\left(\Phat_{t, w, \mathrm{int}}[x]\middle|W_t^{n_t}, \Phat_{t-1}^{(\mathrm{w})}\right)
			\leq \frac{p[x](1-p[x])}{N_{t}[w]\left(\alphat_{i, w}^{(\mathrm{w})}\right)^2}.
		\end{align*}For each $x\in \Xc$, all estimates $\Phat_{t, w, \mathrm{int}}[x], w\in\Wc$ are also unbiased and conditionally independent given $W_t^{n_t}$ and $\Phat^{(\mathrm{w})}_{t-1}$. The combined intermediate estimate $\Phat_{t, \mathrm{int}} = \sum_{w\in \Wc}\frac{N_t[w] \left(\alphat_{i, w}^{(\mathrm{w})}\right)^2}{\sum_{w'\in \Wc}N_t[w']\left(\alphat_{i, w'}^{(\mathrm{w})}\right)^2} \cdot \Phat_{t, w, \mathrm{int}}$ therefore has a conditional variance bounded above as
		\begin{align}
			\var\left(\Phat_{t, \mathrm{int}}[x]\middle|W_t^{n_t}, \Phat_{t-1}^{(\mathrm{w})}\right) \leq \frac{p[x](1-p[x])}{\sum_{w\in\Wc}N_{t}[w]\left(\alphat_{i, w}^{(\mathrm{w})}\right)^2}, \label{eq:watermark-ub-unconcentrated}
		\end{align} (via Lemma \ref{lem:linear-combinations}) where the denominator has an expected value $n_t\alpha_{t}^2\left(1+\chi^2\left(Q_1\middle\|\bar Q_{\alpha_{t}}\right)\right)$. Now,
		\begin{align*}
		\Pbar_{t, p}^{(\mathrm{w})}\left\{\sum_{w\in\Wc}N_{t}[w]\left(\alphat_{i, w}^{(\mathrm{w})}\right)^2 \leq \frac{1}{2} n_t\alpha_{t}^2\left(1+\chi^2\left(Q_1\middle\|\bar Q_{\alpha_{t}}\right)\right)\right\} \overset{(a)}{\leq}& \Pbar_{t, p}^{(\mathrm{w})}\left\{\bigcup\limits_{w\in\Wc} \left\{N_t[w] \leq \frac{n_t\Qbar_{\alpha_t}[w]}{2}\right\} \right\}\\
		\overset{(b)}{\leq}& \sum_{w\in \Wc} \Pbar_{t, p}^{(\mathrm{w})}\left\{N_t[w] \leq \frac{n_t\Qbar_{\alpha_t}[w]}{2}\right\}\\
		\leq& k_W \max_{w\in \Wc} \Pbar_{t, p}^{(\mathrm{w})}\left\{N_t[w] \leq \frac{n_t\Qbar_{\alpha_t}[w]}{2}\right\}\\
		\overset{(c)}{\leq}& k_W\exp\left(-n_t\frac{\min_z\Qbar_{\alpha_t}[w]}{8}\right) \\
		\overset{(d)}{\leq}& \frac{1}{1+\chi^2(Q_1\|\Qbar_{\alpha_t})}, \numberthis \label{eq:watermarks-concentration}
		\end{align*}
		where $(a)$ follows from $W_t[i] \sim \Qbar_{\alpha_t}$ and $\alphat_{i, w}^{(\mathrm{w})} = Q_1[w]/\Qbar_{\alpha_t}[w]$, $(b)$ follows from the union bound, $(c)$ follows from Hoeffding's inequality, and $(d)$ follows from the lower bound assumption on $n_t$ in \eqref{eq:watermark-ub-conditions}. Define the pre-release estimate as
		\begin{align*}
			\Phat_{t, \mathrm{pre}} := \begin{cases}
				\Phat_{t, \mathrm {int}} & \text{if } \sum_{w\in\Wc}N_{t}[w]\left(\alphat_{i, w}^{(\mathrm{w})}\right)^2 > \frac{1}{2} n_t\alpha_{t}^2\left(1+\chi^2\left(Q_1\middle\|\bar Q_{\alpha_{t}}\right)\right)\\
				\phat_{t, \mathrm{int}}(X_t^{n_t}) & \mathrm{otherwise}.
			\end{cases}
		\end{align*} Consequently, from \eqref{eq:watermark-ub-unconcentrated}, Lemma \ref{lem:upper-bound-uncombined}, and \eqref{eq:watermarks-concentration}, \begin{align*}
			E\left[\var\left(\Phat_{t, \mathrm{pre}}[x]\middle|W_t^{n_t}, \Phat_{t-1}^{(\mathrm{w})}\right)\right] \leq& \frac{2\cdot p[x](1-p[x])}{n_t\alpha_{t}^2\left(1+\chi^2\left(Q_1\middle\|\bar Q_{\alpha_{t}}\right)\right)}\\
			& + \frac{p[x](1-p[x])}{n_t\alpha_t^2}\cdot \Pbar_{t, p}^{(\mathrm{w})}\left\{\sum_{w\in\Wc}N_{t}[w]\left(\alphat_{i, w}^{(\mathrm{w})}\right)^2 \leq \frac{1}{2} n_t\alpha_{t}^2\left(1+\chi^2\left(Q_1\middle\|\bar Q_{\alpha_{t}}\right)\right)\right\}\\
			\leq& \frac{3 \cdot p[x](1-p[x])}{\sum_{i=0}^t n_i\alpha_i^2(1+\chi^2\left(Q_1\middle\|\bar Q_{\alpha_{i}}\right))}.
		\end{align*}$\Phat_{t, \mathrm{pre}}$ is unbiased and component-wise uncorrelated with $\Phat_{t-1}^{(\mathrm{w})}$ (since $\phat_{t, \mathrm{int}}(X_t^{n_t})$ and $\Phat_{t, \mathrm{int}}$ are also unbiased and component-wise uncorrelated). Using Lemma \ref{lem:linear-combinations} again to combine $\Phat_{t, \mathrm{pre}}$ and $\Phat_{t-1}^{(\mathrm{w})}$ completes the induction step.
	\end{IEEEproof}

	\subsection{Proof for the masking estimator upper bound}
	Recall that the masking procedure takes an unbiased pre-release estimate $\Phat_t^{(\mathrm{pre})}$ and masking decision $B_t \in \{0, 1\}$ as inputs and releases the estimate
	\begin{align*}
		\Phat_{t}[x] = \begin{cases}
			\Phat_t^{(\mathrm{pre})}[x] & B_t = 0\\
			\openone\{S_t = x\} & B_t = 1
		\end{cases}
	\end{align*}where $S_t \sim \Phat_t^{(\mathrm{pre})}$.
	\begin{IEEEproof}[Proof of Theorem \ref{thm:masking estimator}]
		Fix $B_i {=} u_i {\in} \{0{,} 1\}$ for each $i{\in} [0{:}t]$. Mirroring the notation in Lemma \ref{lem:upper-bound-uncombined}, define $\eta_{i}^2[x]$ as the scalar such that  \begin{align*}
			\var\left(\Phat_{i}[x]\middle|(B_i)_{i\in [1:t]} = (u_i)_{i\in [1:t]}\right)= \eta_{i}^2[x] \cdot p[x](1-p[x]). 
		\end{align*} If $\Phat_{t}^{(\mathrm{pre})}$ is unbiased, $E[\openone\{S_t = x\}] = p$ and therefore $\Phat_{t}$ is also unbiased. 
		
		Using the empirical estimator $\phat_{0, \mathrm{int}} = \phat_{\mathrm{emp}}$ in stage $0$, and subsequently $\phat_{i, \mathrm{int}}$ from Lemma \ref{lem:upper-bound-uncombined} in stage $i\geq1$, we have
		\begin{align*}
			\var\left(\Phat_{0, \mathrm{int}}\middle|(B_i)_{i\in[0:t]} = (u_i)_{i\in [0:t]}\right)\leq & \frac{p[x](1-p[x])}{n_0} & i=0\\
			\var \left(\Phat_{i, \mathrm{int}}[x]\middle|(B_j)_{j\in[0:t]} = (u_j)_{j\in [0:t]}\right) \leq& \frac{p[x](1-p[x])}{n_i\alpha_i^2}\cdot(1-(1-\alpha_i)^2\eta_{i-1}^2[x]) & i\geq 1
		\end{align*}
		These are all unbiased and uncorrelated with the previous estimate. Therefore, using Lemma \ref{lem:linear-combinations}, we can find the pre-release estimate $\Phat_t^{(\mathrm{pre})}$ with a variance upper bound of
		\begin{align*}
			\var \left(\Phat_t^{(\mathrm{pre})}\middle|(B_i)_{i\in[0:t]} = (u_i)_{i\in [0:t]}\right) \leq \frac{p[x](1-p[x])}{n_0 + \sum\limits_{i=1}^t\frac{n_i\alpha_i^2}{1-(1-\alpha_{i}^2)\eta^2_{i-1}[x]}}
		\end{align*}
		
		We now consider two cases: 
		\begin{enumerate}
			\item The variance of $\Phat_i[x]$ given $B_i = 1$ is
			\begin{align*}
				E\left[(1-p[x])^2\Phat_i^{(\mathrm{pre})}[x] + p[x]^2(1-\Phat_{i}^{(\mathrm{pre})})\right] = (1-p[x])^2p[x] + p[x]^2(1-p[x]) = p[x](1-p[x]).
			\end{align*}Thus, if $u_{i-1} = 1$, then $\eta_{i-1}^2[x] = 1$ for all $x\in \Xc$, and consequently
			\begin{align*}
				1{-}(1{-}\alpha_{i})^2\eta_{i-1}^2[x] {=} 2\alpha_{i}.
			\end{align*}
			\item If $B_i=0$, we simply use the trivial upper bound of $1$ for $1-(1-\alpha_{i})^2\eta^2_{i-1}[x]$.
		\end{enumerate}
		Consequently, given $(B_i)_{i\in[0:t]} = (u_i)_{i\in [0:t]}$, we have
		\begin{align*}
			\var \left(\Phat_t^{(\mathrm{pre})}\middle|(B_i)_{i\in[0:t]} = (u_i)_{i\in [0:t]}\right) \leq \sigmab_t^2 {:=}  \frac{p[x](1-p[x])}{n_0 {+} \sum\limits_{i=1}^t \frac{n_i\alpha_{i}}{2}u_i {+} n_i\alpha_i^2 (1{-}u_i)}
		\end{align*}
		
		Taking the sum over $x$ and expectation over the joint distribution of $B_i$ and noticing that masking can only increase the loss by $b_t$, and therefore only a factor of $c$ when $b_t \leq c\sigmab_t^2$ concludes the proof.
	\end{IEEEproof}

\end{document}